\documentclass[ reprint,
superscriptaddress,
%groupedaddress,
%unsortedaddress,
%runinaddress,
%frontmatterverbose,
%preprint,
%showpacs,preprintnumbers,
%nofootinbib,
%nobibnotes,
%bibnotes,
 amsmath,amssymb,
 aps,
%pra,
%prb,
%rmp,
%prstab,
%prstper,
%floatfix,
]{revtex4-1}

\usepackage{subfigure}

\usepackage{hyperref}

\usepackage{xcolor}
\usepackage{tikz}
\usetikzlibrary{quantikz}

\usepackage{algpseudocode}

\usepackage{graphicx}
\usepackage{dcolumn}
\usepackage{bm}

\usepackage[normalem]{ulem} 

\usepackage{amsmath}
\usepackage{amssymb}
\usepackage{amsthm}
\usepackage[utf8]{inputenc}
\usepackage[english]{babel}
\usepackage{mathtools}

\newtheorem{theorem}{Theorem}
\newtheorem{proposition}{Proposition}

\begin{document}

\title{Subspace Variational Quantum Simulation: Fidelity Lower Bounds as Measures of Training Success}

\author{Seung Park}
\affiliation{Institute for Convergence Research and Education in Advanced Technology, Yonsei University, Seoul 03722, Republic of Korea}
\affiliation{Department of Quantum Information, Graduate School, Yonsei University, Seoul 03722, Republic of Korea}
\author{Dongkeun Lee}
\affiliation{Center for Quantum Information R\&D, Korea Institute of Science and Technology Information, Daejeon 34141, Republic of Korea}
\author{Jeongho Bang}
\affiliation{Institute for Convergence Research and Education in Advanced Technology, Yonsei University, Seoul 03722, Republic of Korea}
\affiliation{Department of Quantum Information, Graduate School, Yonsei University, Seoul 03722, Republic of Korea}
\author{Hoon Ryu}
\affiliation{School of Computer Engineering, Kumoh National Institute of Technology, Gumi, Gyeongsangbuk-do 39177, Republic of Korea}
\author{Kyunghyun Baek}
\email{k.baek@yonsei.ac.kr}
\affiliation{Institute for Convergence Research and Education in Advanced Technology, Yonsei University, Seoul 03722, Republic of Korea}\
\affiliation{Department of Quantum Information, Graduate School, Yonsei University, Seoul 03722, Republic of Korea}
\date{\today}

\begin{abstract}
We propose an iterative variational quantum algorithm to simulate the time evolution of arbitrary initial states within a given subspace. The algorithm compresses the Trotter circuit into a shorter-depth parameterized circuit, which is optimized simultaneously over multiple initial states in a single training process using fidelity-based cost functions. After the whole training procedure, we provide an efficiently computable lower bound on the fidelities for arbitrary states within the subspace, which guarantees the performance of the algorithm in the worst-case training scenario. We also show our cost function exhibits a barren-plateau-free region near the initial parameters at each iteration in the training landscape. The experimental demonstration of the algorithm is presented through the simulation of a 2-qubit Ising model on the quantum processor. As a demonstration for a larger system, a simulation of a 10-qubit Ising model is also provided.
\end{abstract}

\keywords{Quantum computation, Variational Quantum Simulation(VQS)}
\maketitle

\section{Introduction}

The simulation of quantum systems is one of the promising applications of quantum computers by comprehending their dynamic properties from scientific perspectives \cite{Feynman1982, daley2022practical, cirac2012goals, altman2021quantum}. Furthermore, simulating quantum systems has served as a foundational method for deriving other useful quantum algorithms \cite{abrams1999quantum, ding2023even}.
Various methods have been developed to simulate quantum systems efficiently encompassing techniques such as product formula \cite{Lloyd1996, Childs2021, layden2022first}, the linear combination of unitary operations \cite{childs2012hamiltonian}, quantum signal processing \cite{low2017optimal}, and qubitization \cite{low2019hamiltonian}. In particular, the product formula-based algorithm has undergone refinements aimed at enhancing its error limits \cite{Childs2021, layden2022first} for the regime where constant factors are more important than asymptotic scaling due to large errors. As a result, the simulation of quantum systems is widely recognized as offering a practical quantum advantage in near-term quantum computing. \cite{georgescu2014quantum, daley2022practical}. The implementation of quantum simulation, nevertheless, is still challenging to achieve the quantum advantage even with the efforts to reduce the computational resources \cite{campbell2021early}.

Variational quantum simulation (VQS) \cite{Li2017} was introduced in the framework of variational quantum algorithms (VQAs) \cite{Cerezo2021_vqa,Bharti2022_noisy, Benedetti2021, commeau2020variational} to alleviate the issues posed by noise and limited circuit depth in implementing quantum simulation algorithms. In particular, fidelity-based iterative VQAs for quantum simulation have attracted significant attention due to their noise-resilience feature, which enables executing the algorithm within timescales shorter than the limited coherence time of the device \cite{otten2019noise}. These methods compress the circuit depth required to simulate the dynamics of quantum systems---especially in approaches based on product formulas---by leveraging the entangling power of parameterized quantum circuits (PQCs) \cite{berthusen2022quantum, lin2021real, Matteo2025, barison2021efficient}. Moreover, these approaches have shown promise in addressing the barren plateau problem \cite{haug2021opt}---that leads to exponential sampling costs for gradient estimation---and also in exhibiting approximately convex training landscapes \cite{puig2025variational}.

Quantum simulation algorithms generally enable the simulation of quantum state time evolution independently of the choice of initial state. In contrast, iterative variational approaches focus on the dynamics of a specific initial state. However, it is often necessary to simulate  a subset of states within a low-energy subspace of the Hamiltonian \cite{Sahinoglu2021, gong2024complexity, zlokapa2024hamiltonian}. For this purpose, the simulation of a set of states within a subspace, the so-called subspace variational quantum simulation, has been developed \cite{heya2023subspace}. However, this approach requires executing a subspace-search variational quantum eigensolver \cite{Nakanishi2019_ssvqe}, which is generally QMA-complete  \cite{Kempe2006}---and thus not expected to be efficiently solvable even by quantum computers---and also vulnerable to the barren plateau phenomenon.

In this work, we propose an algorithm that captures the Trotter time evolution of multiple states living in a chosen subspace using PQCs. The algorithm trains the PQC to iteratively follow the Trotter time evolution within the subspace at each time step, thereby compressing the overall dynamics into a shorter-depth PQC within the VQA framework, for a subspace that is expressible by the chosen ansatz. By optimizing the PQC simultaneously over the Trotter evolution of $d$ orthonormal basis states of the subspace and $d-1$ specific linear combinations thereof, the trained PQC can approximately reproduce the time evolution of arbitrary states within the subspace. 

We assess the success of training through the fidelity between the state evolved under the Trotter circuit and the parameterized state, as is common in studies of iterative variational approaches \cite{berthusen2022quantum, lin2021real}. In this context, a lower bound on the fidelity serves as a performance guarantee for the algorithm in worst-case training scenarios, whereas the performance of VQAs is generally difficult to assess quantitatively.
We introduce such a bound that can be efficiently computed via semidefinite programming, using constraints based on the fidelities of the basis states that are naturally obtained from the training history. We also show that training the PQC for the Trotter time evolution of a given subspace can be carried out in a ‘warm-start’ regime \cite{puig2025variational}, meaning the training landscape includes a region free from barren plateaus. 

In the following, we first present the main structure of our algorithm---covering the cost function, the evaluation of fidelity lower bounds, and the validation of the warm-start regime in the training process---in Sections~\ref{sec:framework} to~\ref{sec:ws}.  We then provide experimental and simulation results. %Sections~\ref{sec:Ising} and ~\ref{sec:ent} describe 
Section~\ref{sec:Ising} describes the 2-qubit experiments conducted on the quantum processor, demonstrating our algorithm in practice. Section~\ref{sec:10Ising} presents simulations of a 10-qubit system as an example of its implementation in a larger system.

\section{Methods}\label{sec:algorithm}

% \subsection{Framework of time evolution in parameterized quantum circuits}
\subsection{Framework of subspace time evolution in parameterized quantum circuits}

\label{sec:framework}

\begin{figure*}[t]
\centering
  \includegraphics[width=16cm]{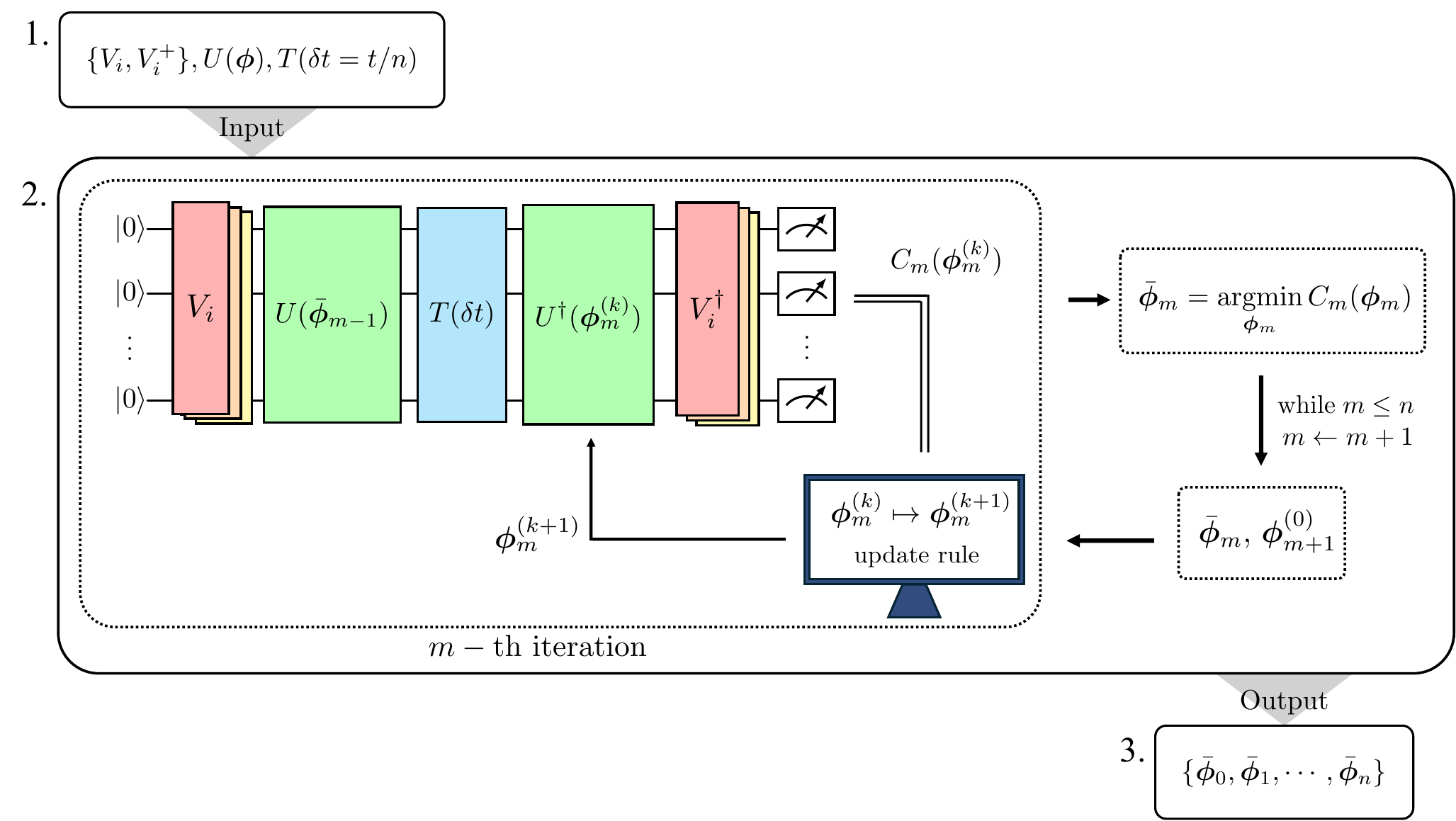}

\caption{\label{fig:scheme} Schematic diagram of the algorithm. Each numbered box corresponds to the matching step in the algorithm described above.
1. Preparation of the initial states.
2. The process of the $m$-th iteration. The cost function is evaluated by the quantum circuit shown, via fidelity measurements. The folded gates ${V_i}$ prepare the initial states such that $|\Psi_i\rangle = V_i|\bm{0}\rangle$ (similarly for $|\Psi_i^+\rangle$). Based on the cost function value, the classical computer (navy monitor) updates the variational parameters according to a chosen update rule (e.g., gradient descent), and the cost function is minimized iteratively.
3. After completing the iteration for $m = 1$ to $n$, the optimized parameters ${\bar{\bm{\phi}}_m}$ are returned as the output.}
\end{figure*}

Trotter approximation is a standard technique widely used in quantum simulation \cite{Lloyd1996, georgescu2014quantum}, which can efficiently simulate the time evolution of a system described by a $k$-local Hamiltonian. Specifically, for a given Hamiltonian $H = \sum_{l=1}^L H_l$, where each local term $H_l$ involves at most $k$ interactions, the dynamics of the system with an initial state $|\Psi_0\rangle$ can be simulated as
\begin{equation}
e^{-iHt}|\Psi_0\rangle \sim T(\delta t)^n|\Psi_0\rangle
\end{equation}
for sufficiently large $n$ determined based on the desired accuracy, where $T(\delta t)$ is Trotter circuit for the time slice $\delta t=t/n$.
However, the real-world implementation of the algorithm remains challenging due to limitations of current quantum computers, such as noise and limited circuit depth. One approach to dealing with these challenges is to utilize a VQA that leverages the entangling power of PQCs by compressing the Trotter circuit \cite{lin2021real}.

We develop a VQA that aims to capture the Trotter time evolution of a subspace spanned by chosen orthonormal initial states using a PQC. The algorithm consists of: a set of orthonormal initial states $\{|\Psi_i\rangle\}_{i=0}^{d-1}$ spanning the subspace $\mathcal{S}$; a PQC ansatz $U(\bm{\phi})$; and the Trotter circuit $T(\delta t)$ for the time step $\delta t = t/n$, where $t$ is the total simulation time. The PQC ansatz iteratively trains a target unitary 
\begin{align}\label{eq:U_m}
    \mathcal{U}_m\coloneqq \sum_{i=0}^{d-1} T(\delta t)^m|\Psi_i\rangle \langle \Psi_i| + \mathcal{U}_\perp,
\end{align}
by increasing the time step $m=0,1,\cdots, n$. Here $\mathcal{U}_\perp$ does not affect the evolution of states in $\mathcal S$, thus can be arbitrarily chosen. 

In detail, we define a cost function for $m$-th iteration of the algorithm as
\begin{widetext}
\begin{align}
    C_m(\bm{\phi}_m) 
    \coloneqq 1-\frac{1}{2d-1}\left(\sum_{i=0}^{d-1} \left|\langle\Psi_i(\bm{\phi}_m)|T(\delta t) | \Psi_i(\bar{\bm{\phi}}_{m-1})\rangle\right|^2+\sum_{i=1}^{d-1} \left| \langle\Psi_i^+(\bm{\phi}_m)|T(\delta t)|\Psi_i^+(\bar{\bm{\phi}}_{m-1})\rangle\right|^2\right),\label{eq:cost_m}
\end{align}
\end{widetext}
with $|\Psi_i^+\rangle\coloneqq (|\Psi_0\rangle+|\Psi_i\rangle)/\sqrt{2}$ and $|\Psi(\bm{\phi})\rangle\coloneqq U(\bm{\phi})|\Psi\rangle$. The fixed parameter $\bar{\bm{\phi}}_{m-1}$ is optimal parameter obtained from $(m-1)$-th training step, such that $|\Psi_i(\bar{\bm{\phi}}_{m-1})\rangle\approx T(\delta t)^{m-1}|\Psi_i\rangle$. The cost function is designed to reproduce the target unitary $\mathcal{U}_m$ using the PQC, $U(\bar{\bm{\phi}}_m)$, at each iteration. To this end, optimizing the first term, sum of the fidelities of the basis states in the cost function, ensures that $U(\bar{\bm{\phi}}_m)|\Psi_i\rangle \approx T^m(\delta t)|\Psi_i\rangle$ for all $i$, up to a global phase. However, when optimizing over multiple states simultaneously, each state may acquire a different global phase, resulting in uncontrolled relative phases between the basis states for different $i$. Regulating these relative phases is crucial for accurately reproducing the sub-block of the Trotter matrix (matrix representation of Trotter circuit in the basis $\{|\Psi_i\rangle\}$) within the chosen subspace---an issue that does not arise when targeting the time evolution of a single initial state. Since fidelity is defined as the squared magnitude of the overlap, relative phase information is inherently lost. Therefore, the second term of the cost function---based on fidelities of pairwise superpositions of the basis states---is necessary to recover information about the relative phases. The necessity of including the states $\{|\Psi_i^+\rangle\}$ is demonstrated in Appendix~\ref{sec:comp_fewer}.

Based on this cost function, our algorithm proceeds as follows:
\begin{enumerate}
\item Prepare the following as input: the orthonormal initial states $\{|\Psi_i\rangle\}_{i=0}^{d-1}$ along with their linear combinations $\{|\Psi_i^+\rangle\}_{i=1}^{d-1}$; the PQC $U(\bm{\phi})$; and the Trotter circuit $T(\delta t)$ for a single time step $\delta t = t/n$, where $t$ is the total simulation time and $n$ is the number of time steps.
\item Set $m=0$, and initialize $\bar{\bm{\phi}}_0$ such that $U(\bar{\bm{\phi}}_0) = I$. For $m=1,2,\cdots, n$, iteratively optimize the cost function as: 
\begin{equation}
\bar{\bm{\phi}}_m = \underset{\bm{\phi}_m}{\textrm{argmin}}\,C_m(\bm{\phi}_m),
\end{equation} 
\item The output is $\{\bar{\bm{\phi}}_0,\bar{\bm{\phi}}_1,\cdots,\bar{\bm{\phi}}_n\} $.
\end{enumerate}
Figure~\ref{fig:scheme} is a schematic diagram of our algorithm. The quantum circuit used to evaluate the cost function by measuring fidelity \cite{rethinasamy2023estimating} at each iteration is also shown in the figure.

\subsection{Lower bound of fidelity}
\label{sec:lb}
%It would be much stronger if this bound couldbe characterized asymptotically with respect to the time step and some scalar measure of the subspace (perhaps a projected norm or something, I’m not sure). This protocol for time evolution also merits a careful description of the propagation of errors over many time steps, considering not only the errors due to the subspace approximation they do discuss, but also Trotterization and optimization error.

In our framework, the training success is assessed through the infidelity between the target state, evolved under the Trotter circuit, and the parameterized state. Thus, it is natural to evaluate the lower (upper) bound of the fidelity (infidelity) as a performance guarantee for the worst-case training scenario.

As described in Eq.~\eqref{eq:cost_m}, the cost function is designed for $|\Psi_i(\bm{\phi}_m)\rangle$  to approximately simulate $T(\delta t)U(\bar{\bm{\phi}}_{m-1})|\Psi_i\rangle$, rather than  $T(\delta t)^m|\Psi_i\rangle$, thereby maintaining the advantage of shorter-circuit-depth implementation across iterations. However, our primary interest lies in simulating $T(\delta t)^m|\Psi_i\rangle$. Thus, to assess the quality of training procedure, we derive a lower bound on the fidelities between the optimized state $|\Psi_i(\bar{\bm{\phi}}_m)\rangle$ and the state evolved under the $m$-fold Trotter circuit  $T(\delta t)^m|\Psi_i\rangle$, as stated in the following proposition, which will be proven in Appendix~\ref{sec:proof_lb}:

\begin{proposition}\label{prop:lb_m}
The fidelity between $|\Psi_i(\bar{\bm{\phi}}_m)\rangle$ and the ideally evolved state $T(\delta t)^m|\Psi_i\rangle$ under $m$-fold Trotter circuit is lower bounded by
\begin{align}\label{eq:lowerbound_m}
\cos^2 \left(\min\left[\sum_{j=1}^m \arccos\left(\sqrt{f_{i,j}}\right), \frac{\pi}{2}\right]\right),
\end{align}
where $f_{i,j}$ is the fidelity obtained at $j$-th iteration, defined as $f_{i,j} \coloneqq |\langle\Psi_i(\bar{\bm{\phi}}_j)|T(\delta t)|\Psi_i(\bar{\bm{\phi}}_{j-1})\rangle|^2$. 
\end{proposition}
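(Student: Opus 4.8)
The plan is to convert fidelities into a geodesic angle and then invoke a triangle inequality. The key object is the Bures (Fubini--Study) angle between pure states, $\theta(|\psi\rangle,|\phi\rangle) \coloneqq \arccos|\langle\psi|\phi\rangle|$, which takes values in $[0,\pi/2]$ and is a genuine metric on projective Hilbert space; in particular it obeys the triangle inequality. Under this identification the summand in the statement is exactly $\arccos(\sqrt{f_{i,j}}) = \theta\bigl(|\Psi_i(\bar{\bm{\phi}}_j)\rangle,\,T(\delta t)|\Psi_i(\bar{\bm{\phi}}_{j-1})\rangle\bigr)$, and the target fidelity is $\cos^2$ of the angle between $|\Psi_i(\bar{\bm{\phi}}_m)\rangle$ and $T(\delta t)^m|\Psi_i\rangle$. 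Since every quantity of interest depends only on overlap magnitudes, global phases are irrelevant, which is precisely why the projective angle is the right tool.

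First I would build a telescoping chain interpolating between the two states of interest. Using $U(\bar{\bm{\phi}}_0)=I$, so that $|\Psi_i(\bar{\bm{\phi}}_0)\rangle = |\Psi_i\rangle$, I define the intermediate states
\[
|\gamma_k\rangle \coloneqq T(\delta t)^{m-k}\,|\Psi_i(\bar{\bm{\phi}}_k)\rangle, \qquad k=0,1,\dots,m,
\]
so that $|\gamma_0\rangle = T(\delta t)^m|\Psi_i\rangle$ is the ideally evolved state and $|\gamma_m\rangle = |\Psi_i(\bar{\bm{\phi}}_m)\rangle$ is the optimized state. The crucial structural input is that $T(\delta t)$ is unitary and therefore angle-preserving, $\theta(T(\delta t)|\psi\rangle,T(\delta t)|\phi\rangle)=\theta(|\psi\rangle,|\phi\rangle)$. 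Stripping the common factor $T(\delta t)^{m-k}$ from two consecutive terms then gives $\theta(|\gamma_{k-1}\rangle,|\gamma_k\rangle) = \theta\bigl(T(\delta t)|\Psi_i(\bar{\bm{\phi}}_{k-1})\rangle,\,|\Psi_i(\bar{\bm{\phi}}_k)\rangle\bigr) = \arccos(\sqrt{f_{i,k}})$, i.e.\ the per-step training angles are exactly the links of the chain.

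Next I would sum the links via the triangle inequality,
\[
\theta(|\gamma_0\rangle,|\gamma_m\rangle) \le \sum_{k=1}^m \theta(|\gamma_{k-1}\rangle,|\gamma_k\rangle) = \sum_{j=1}^m \arccos(\sqrt{f_{i,j}}).
\]
Because $\theta$ is automatically bounded by $\pi/2$, the right-hand side may be sharpened to the minimum of the accumulated angle and $\pi/2$ without weakening the estimate. Finally, since $\cos^2$ is monotonically decreasing on $[0,\pi/2]$, applying it to both sides reverses the inequality and yields exactly the stated lower bound on the fidelity $\cos^2\theta(|\gamma_0\rangle,|\gamma_m\rangle)$.

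I expect the only nontrivial ingredient to be the triangle inequality for the angle metric; the monotonicity of $\cos^2$, the angle-invariance under the unitary $T(\delta t)$, and the telescoping bookkeeping are all routine once the chain $|\gamma_k\rangle$ is set up. The metric property of $\arccos|\langle\psi|\phi\rangle|$ (equivalently, that the Bures angle is a distance) is a standard fact that I would either cite or verify directly, and it is what legitimizes the subadditive accumulation of single-step infidelities across the $m$ iterations.
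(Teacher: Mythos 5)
Your proposal is correct and follows essentially the same route as the paper: both proofs rest on the triangle inequality for the angle $\arccos|\langle\psi|\phi\rangle|$, applied sequentially through the intermediate states $T(\delta t)^{m-k}|\Psi_i(\bar{\bm{\phi}}_k)\rangle$, with the per-step angles identified as $\arccos\sqrt{f_{i,j}}$ via unitary invariance. Your write-up merely makes explicit the telescoping chain, the angle-preservation under $T(\delta t)$, and the final monotonicity-of-$\cos^2$ step that the paper's terser proof leaves implicit.
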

\noindent 
During the training of PQCs, the fidelity between $|\Psi_i(\bar{\bm{\phi}}_m)\rangle$ and $T(\delta t)|\Psi_i(\bar{\bm{\phi}}_{m-1})\rangle$ at each step $m$ is naturally obtained. 
Consequently, one can estimate how well the PQCs approximate the $m$-fold Trotter circuit $T(\delta t)^m$ (restricted to the subspace) without directly evaluating the fidelity $|\langle\Psi_i(\bar{\bm{\phi}}_m)|T(\delta t)^m|\Psi_i\rangle|^2$, at least in terms of a lower bound. Furthermore, these lower bounds serve as constraints for computing the fidelity lower bound of arbitrary states within the subspace, as described below.

The ultimate goal of our method is to approximate the time evolution of a subspace, rather than that of individual states. Consequently, it is essential to establish a lower bound on the fidelity between the Trotter evolution and its approximation for an \emph{arbitrary initial state} within the given subspace. Specifically, given sets of $d$ orthonormal approximated states (the parameterized states) ${|\Psi_0\rangle, \cdots, |\Psi_{d-1}\rangle}$ and the corresponding target states (the states evolved under the Trotter circuit) ${|\Phi_0\rangle, \cdots, |\Phi_{d-1}\rangle}$, we aim to evaluate the minimum of $|\langle \Phi | \Psi \rangle|^2$, where $|\Phi\rangle = \sum_{i=0}^{d-1} c_i |\Phi_i\rangle$ and $|\Psi\rangle = \sum_{i=0}^{d-1} c_i |\Psi_i\rangle$. According to Proposition~\ref{prop:lb_m}, the constraints of this minimization are derived from experimentally obtained fidelities $F_i$ and $F_i^+$, expressed as $F_i \leq |\langle \Phi_i | \Psi_i \rangle|^2 \leq 1$ and $F_i^+ \leq |\langle \Phi_i^+ | \Psi_i^+ \rangle|^2 \leq 1$, respectively, where $|\Phi_i^+\rangle = (|\Phi_0\rangle + |\Phi_i\rangle)/\sqrt{2}$ for $i > 0$, and similarly for $|\Psi_i^+\rangle$. This optimization problem is formulated as a quadratically constrained quadratic program (QCQP):
\begin{align}
\min&\; \bm{f}^\dagger C \bm{f}\\\textrm{s.t.}&\;F_\alpha \leq \bm{f}^\dagger C_\alpha\bm{f}\leq 1,\;\;\alpha=0,1,\cdots,d-1,\nonumber\\&\; F_\beta^+\leq \bm{f}^\dagger C_\beta^+ \bm{f}\leq 1,\;\; \beta=1,2,\cdots,d-1,\nonumber\\&\;\phantom{F_\gamma\leq}\;\;\bm{f}^\dagger C^{\text{Nor}}_\gamma\bm{f}\leq 1,\;\; \gamma=0,1,\cdots,d-1,\nonumber
\end{align}
where $\bm{f}=(\langle\Phi_0|\Psi_0\rangle,\langle\Phi_0|\Psi_1\rangle,\cdots,\langle\Phi_{d-1}|\Psi_{d-1}\rangle)$. For convenience, we employ $d$-based digit indexing for the vector and matrix elements, ranging from $00$ to $(d-1)(d-1)$, such that $[\bm{f}]_{ij} = \langle\Phi_i|\Psi_j\rangle$. 
The coefficient matrix $C$ is Hermitian and positive semidefinite, with elements defined as $[C]_{ij,kq} = c_i^*c_j c_kc_q^*$. The constraint matrices are explicitly defined as:
\begin{align}
[C_\alpha]_{ij,kq} &=
\begin{cases}
1 & \text{if } i = j = k = q = \alpha \\
0 & \text{otherwise}
\end{cases} \\
[C_\beta^+]_{ij,kq} &=
\begin{cases}
1/4 & \text{if } ij, kq \in \{00, 0\beta, \beta0, \beta\beta\} \\
0 & \text{otherwise}
\end{cases} \\
[C^{\text{Nor}}_\gamma]_{ij,kq} &=
\begin{cases}
\delta_{ij, kq} & \text{if } j = q = \gamma \\
0 & \text{otherwise}
\end{cases},
\end{align}
where $C_\alpha$, $C_\beta^+$ impose the constraints corresponding to the fidelities $F_\alpha$, $F_\beta^+$, while $C_\gamma^{\text{Nor}}$ stands for the condition that the fidelity of any state is less than or equal to 1.
However, it is important to note that this QCQP is generally non-convex and NP-hard \cite{xu2023new}, implying that finding the global minimum is infeasible for large-scale instances.

The problem can be reformulated as semidefinite programming (SDP) by introducing a matrix $F\coloneqq \bm{f}\bm{f}^\dagger$, where $F\in \mathbb{C}^{d^2\times d^2}$, $F\succeq0$ and $\text{Rank}[F]=1$. However, this formulation remains equivalent to the original QCQP and therefore still not solvable in polynomial time. Since the rank constraint poses the primary computational difficulty, we relax it to obtain a standard SDP:
\begin{align}\label{eq:sdr}
\min&\; \text{Tr}\left[F C\right]\\\textrm{s.t.}&\;F\succeq 0,\nonumber\\&\;F_\alpha \leq \text{Tr}\left[F C_\alpha\right]\leq 1,\;\;\alpha=0,1,\cdots,d-1,\nonumber\\&\; F_\beta^+\leq \text{Tr}\left[F C_\beta^+\right]\leq 1,\;\; \beta=1,2,\cdots,d-1,\nonumber\\&\;\phantom{F_\gamma\leq}\;\;\text{Tr}\left[F C^{\text{Nor}}_\gamma\right]\leq 1,\;\; \gamma=0,1,\cdots,d-1\nonumber.
\end{align}
The solution to this relaxed problem provides a weaker lower bound on the fidelity. This approach, known as semidefinite relaxation (SDR) \cite{goemans1995improved}, can be solved efficiently, and its solution is guaranteed to be globally optimal.
\begin{theorem}\label{thm:lb}
The SDP in Eq.~\eqref{eq:sdr} can be solved efficiently using resources that scale polynomially with the dimension $d$. Moreover, the problem exhibits strong duality when $F_\alpha, F_\beta^+ < 1$ for all $\alpha, \beta$, a condition that is trivially satisfied in practical experiments.
\end{theorem}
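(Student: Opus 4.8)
The plan is to prove the two assertions separately: polynomial-time solvability follows because Eq.~\eqref{eq:sdr} is a genuine convex SDP of controlled size, while strong duality follows from verifying Slater's condition. For the first claim, I would note that the decision variable $F$ is a Hermitian $d^2\times d^2$ matrix restricted to the positive semidefinite cone, with a linear objective and a family of constraints that are linear in $F$. Counting these, there are $d$ constraints of type $C_\alpha$, $d-1$ of type $C_\beta^+$, and $d$ of type $C^{\text{Nor}}_\gamma$ (most two-sided), so the number of scalar constraints is $O(d)$ while the cone dimension is $N=d^2$. I would then invoke the standard complexity guarantee for interior-point methods: an SDP with an $N\times N$ matrix variable and $m$ linear constraints is solvable to additive accuracy $\epsilon$ in time $\mathrm{poly}(N,m,\log(1/\epsilon))$. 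Substituting $N=d^2$ and $m=O(d)$ gives a running time polynomial in $d$, and since the relaxed problem is convex its computed optimum is the global optimum of Eq.~\eqref{eq:sdr}.

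For the second claim I would establish strong duality by exhibiting a strictly feasible primal point, i.e.\ verifying Slater's condition, which for this conic program guarantees zero duality gap and attainment of the dual optimum. The natural anchor is the perfect-simulation configuration $\bm{f}^{\star}$ with $[\bm{f}^\star]_{ij}=\delta_{ij}$ (corresponding to $|\Psi_i\rangle=|\Phi_i\rangle$) and $F^\star=\bm{f}^\star(\bm{f}^\star)^\dagger$; a direct computation shows that $F^\star$ saturates every constraint at the upper value, $\text{Tr}[F^\star C_\alpha]=\text{Tr}[F^\star C_\beta^+]=\text{Tr}[F^\star C^{\text{Nor}}_\gamma]=1$. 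This point is feasible but lies on the boundary, so I would push it into the interior by setting
\[
F=(1-t)\,F^\star+s\,I,
\]
with small parameters $s,t>0$ to be chosen, where $I$ is the $d^2\times d^2$ identity. Adding $sI$ forces $F\succ 0$, while the factor $(1-t)$ creates slack below the upper bounds.

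Substituting this ansatz collapses every constraint to an explicit scalar inequality in $s$ and $t$, using the weights $\text{Tr}[I\,C_\alpha]=\text{Tr}[I\,C_\beta^+]=1$ and $\text{Tr}[I\,C^{\text{Nor}}_\gamma]=d$. One finds $\text{Tr}[FC_\alpha]=\text{Tr}[FC_\beta^+]=1-(t-s)$ and $\text{Tr}[FC^{\text{Nor}}_\gamma]=1-t+sd$. The hypothesis $F_\alpha,F_\beta^+<1$ is exactly what supplies positive slack: choosing $t-s$ positive but smaller than $\min_\alpha(1-F_\alpha)$ and $\min_\beta(1-F_\beta^+)$ makes both the lower and upper fidelity bounds strict, while requiring $s<t/d$ keeps the normalization bounds strictly below $1$. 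These conditions are jointly satisfiable for sufficiently small $s,t$, so $F$ is strictly feasible, Slater's condition holds, and strong duality follows.

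I expect the main obstacle to be the simultaneous verification that the perturbed point satisfies all three families of constraints strictly while remaining positive definite: the upper bounds, the fidelity lower bounds, and the normalization bounds pull $(s,t)$ in competing directions, and one must confirm that the slack guaranteed by $F_\alpha,F_\beta^+<1$ is enough to accommodate all of them at once. The remaining work is the careful-but-routine evaluation of the constraint-matrix traces above, since these weights determine the admissible window for $(s,t)$; once they are in hand, strict feasibility, and hence the claimed strong duality, follows immediately.
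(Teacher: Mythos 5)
Your proposal is correct and takes essentially the same approach as the paper: polynomial-time solvability is invoked via standard interior-point complexity for an SDP of size $d^2$, and strong duality is obtained from Slater's condition by exhibiting a strictly feasible point built from the rank-one ``perfect-overlap'' matrix mixed with the identity. Indeed, the paper's entrywise construction is exactly your ansatz $(1-t)F^{\star}+sI$ with $1-t=F_{\text{max}}-\epsilon$ and $s=\epsilon$, where $F_{\text{max}}=\max(\{F_\alpha\},\{F_\beta^+\})$ and $0<\epsilon<(1-F_{\text{max}})/(d-1)$.
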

The proof is given in Appendix~\ref{sec:proof_lb}. 

Note that if $F_\alpha=F_\beta^+=1$ for all $\alpha,\beta$, the PQC perfectly reproduces the sub-blocks of the Trotter matrix 
(restricted to the subspace) by training on $\{|\Psi_i\rangle, |\Psi_i^+\rangle\}$, and the fidelity of every arbitrary state within the subspace reaches $1$. 
Any deviation from this ideal case stems from imperfect training, where the fidelities fall below one. In such cases, the lower bound offers the worst-case scenario---the least favorable choice of parameters consistent with the constraints $\{F_0,F_1,F_1^+\}$.

\begin{figure}[t]
        \centering

        \subfigure[]{\includegraphics[width=4.2cm]{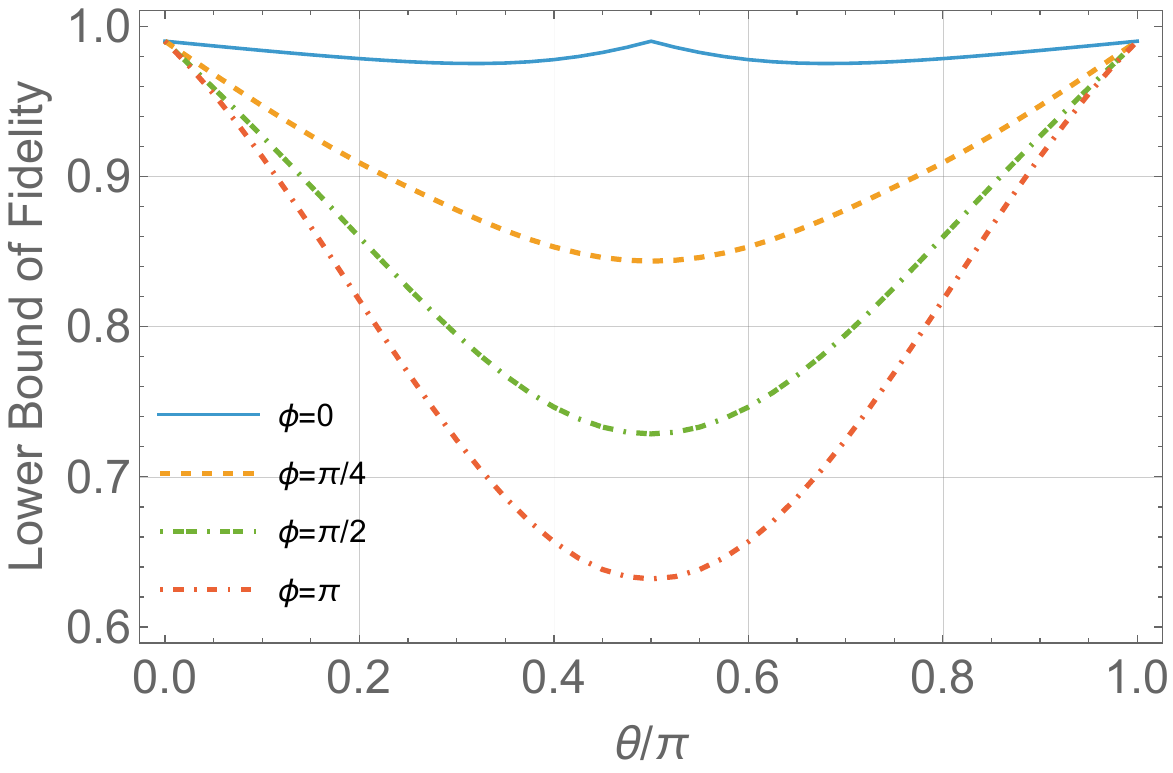}\label{fig:fids_lb}}
        \subfigure[]{\includegraphics[width=4.2cm]{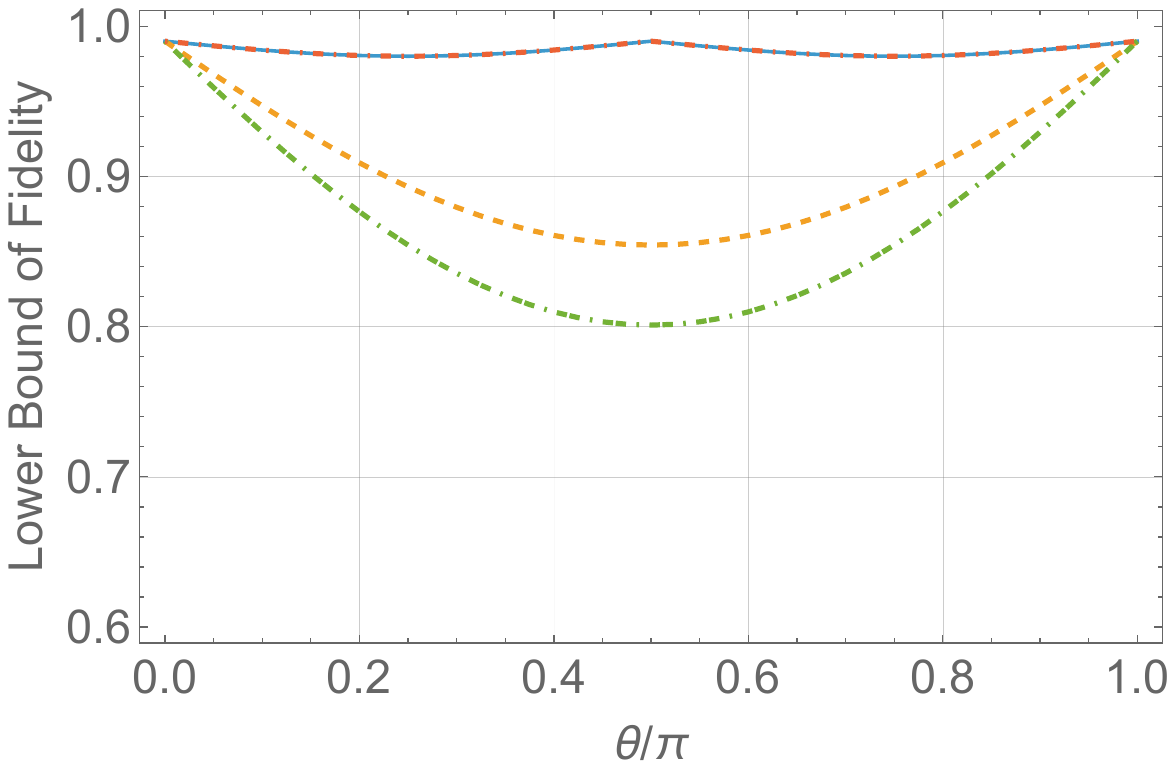} \label{fig:fids_lb_m}  }
    \subfigure[]{\includegraphics[width=4.2cm]{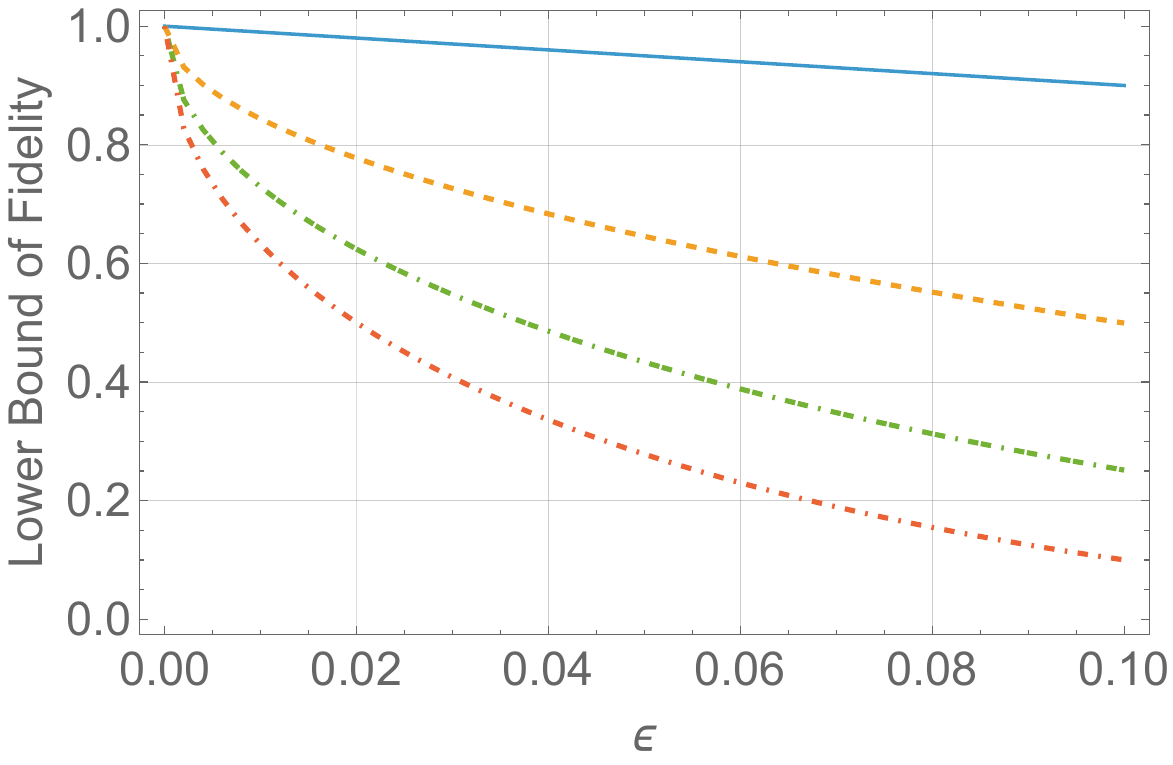}\label{fig:fid_pertur}}
   \subfigure[]{ \includegraphics[width=4.2cm]{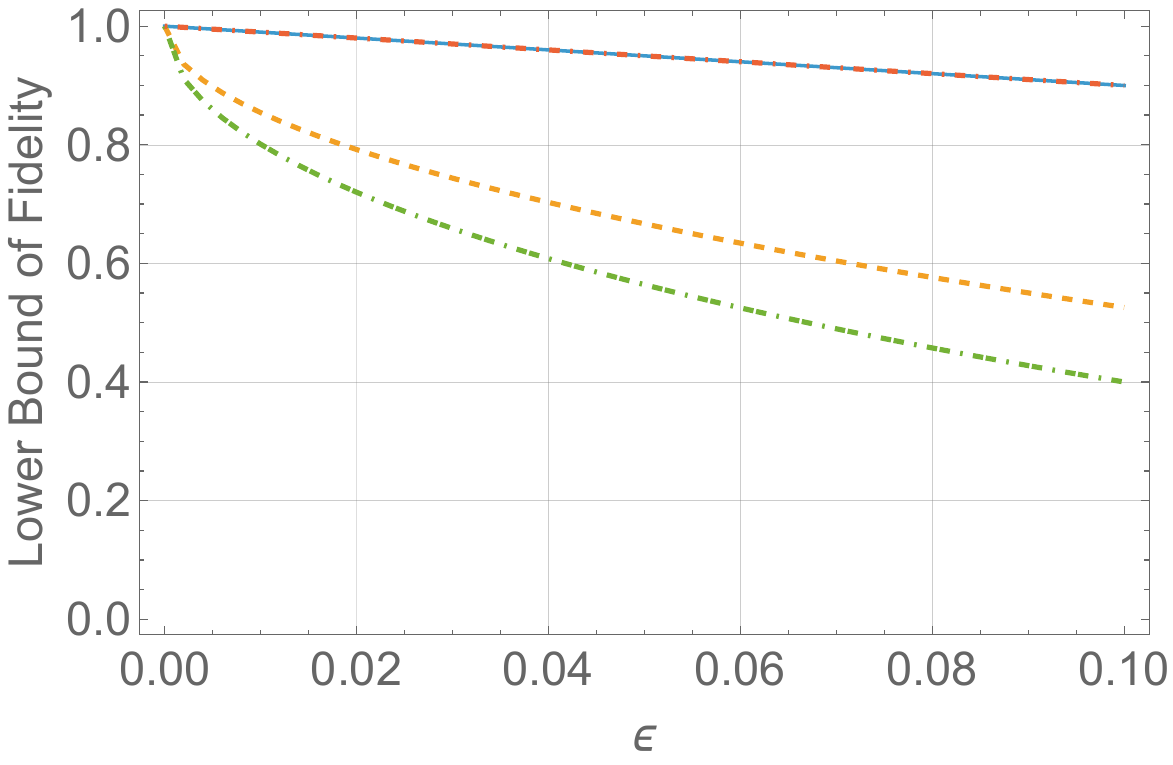}\label{fig:fid_pertur_m}}
   
\caption{SDP-based calculation of the fidelity lower bound. (a) and (b) are the fidelity lower bound for $\cos(\theta/2)|\Psi_0\rangle+e^{i\phi}\sin(\theta/2)|\Psi_1\rangle$ state, under the constraints $F_0=F_1=F_1^+=0.99$ and $F_0=F_1=F_1^+=F_1^-=0.99$, respectively. (c) and (d) show the fidelity lower bound for the state $\cos(\pi/4)|\Psi_0\rangle+e^{i\phi}\sin(\pi/4)|\Psi_1\rangle$ as a function of the perturbation $\epsilon$, given as the infidelity $F=1-\epsilon$. (c) uses constraints $F_0=F_1=F_1^+= 1-\epsilon$, based on three initial states: $|\Psi_0\rangle, |\Psi_1\rangle$, and $|\Psi_1^+\rangle$. (d) includes an additional fidelity constraint $F_1^-= 1-\epsilon$ for the initial state $|\Psi_1^-\rangle$.}
\label{fig:fid_lb_all}
\end{figure}

Figure~\ref{fig:fids_lb} illustrates a simple example of SDP calculation of the lower bounds for subspace of dimension 2, spanned by $\{|\Psi_0\rangle,|\Psi_1\rangle\}$, for given values of $F_0$, $F_1$, $F_1^+$. In the figure, the fidelity lower bound decreases as the phase difference between the examined state and the state $|\Psi_1^+\rangle$---which provides the constraint---increases, reaching a minimum at $\phi = \pi$. It is worth noting that this lower bound is derived from the minimal information ${F_0, F_1, F_1^+}$. In practice, a trained PQC may exhibit higher fidelities across various states, and the quality of the lower bound can be improved by incorporating additional informations---specifically, fidelities of other states---obtained from the trained PQC.
If one uses $|\Psi_i^{\bm{a}}\rangle \coloneqq a_0|\Psi_0\rangle + a_1|\Psi_i\rangle$ as an additional state for each $i$, the fidelity lower bound can be recomputed using Eq.~\eqref{eq:sdr} with the following additional constraint:
\begin{equation}
F_\alpha^{\bm{a}}\leq \text{Tr}\left[FC_\alpha^{\bm{a}}\right]\leq 1,\;\;\alpha = 1,2,\cdots,d-1,
\end{equation}
where the matrix $C_\alpha^{\bm{a}}$ is
\begin{equation}
[C_\alpha^{\bm{a}}]_{ij,kq} =
\begin{cases}
a^*_ia_ja_ka^*_q & \text{if } ij, kq \in \{00, 0\alpha, \alpha0, \alpha\alpha\} \\
0 & \text{otherwise}
\end{cases}.
\end{equation}

Since a state $|\Psi_1^-\rangle\coloneqq(|\Psi_0\rangle-|\Psi_1\rangle)/\sqrt{2}$---which corresponds to the case $a_0=-a_1=1/\sqrt{2}$--- yields the lowest fidelity lower bound under the minimal information, we consider the case that includes $|\Psi_1^-\rangle$ for an additional constraint, as an example. Figure~\ref{fig:fids_lb_m} presents the same calculation as in Figure~\ref{fig:fids_lb}, but with the additional constraint $F_1^-=0.99$. The figure shows that, in addition to raising the fidelity lower bound of $\phi=\pi$ (corresponding to $|\Psi_1^-\rangle)$ by using it as a constraint, those for other states also improve due to the added training information. 

In addition, we examine how the fidelity lower bound declines with increasing training imperfections, and how an additional constraint affects this behavior. 
Figure~\ref{fig:fid_pertur} and ~\ref{fig:fid_pertur_m} show the decrease in the fidelity lower bound as a function of the perturbation $\epsilon$, where $F_i=F_i^{\pm}=1-\epsilon$. Two cases are compared: one imposing the minimal constraints, and another that includes the additional constraint from the state $|\Psi_1^-\rangle$. One can see in the figure that the fidelity lower bound drops sharply for states out of phase with the states providing the constraints, as infidelity grows, but the additional constraint mitigates the overall decline. This suggests that the fidelity lower bound can be improved by incorporating the fidelities of appropriately chosen states---depending on the target states to simulate---extracted from the trained PQC, thereby providing a performance guarantee for simulating the target subspace using the PQC.

\subsection{Barren-plateau-free regions in the training landscape}
\label{sec:ws}

In the framework of VQAs, a major challenge is the barren plateau phenomenon, where the cost function exhibits exponentially vanishing gradients as the number of qubits increases, leading to exponential sampling overhead during optimization. Despite this challenge, it has been shown that, for fidelity-based cost functions, the variance of the cost function admits a plynomial lower bound in the number of qubits near certain initialization points, known as \emph{warm starts} \cite{puig2025variational}. Notably, this result has been derived specifically for the case of single-state optimization, where a PQC is iteratively trained to approximate the Trotter evolution of a single input state. This finding provides theoretical support for the effectiveness of warm-start strategies in mitigating barren plateaus during iterative training, at least in the single-state setting.

In our framework, however, the cost function is defined as the sum of multiple fidelity terms as given in Eq.~\eqref{eq:cost_m}. We, therefore, generalize the warm-start analysis for fidelity-based cost functions to the case involving a sum over multiple fidelities.
The key idea behind the generalization is that, although the fidelity terms appear mixed together within a square during the computation of the variance, the variance can be linearly decomposed into individual terms corresponding to each fidelity. As a result, for sufficiently small $\delta t$, the variance of the the cost function in Eq.~\eqref{eq:cost_m} at each time step $m$ has a polynomial lower bound with respect to the number of qubits, within a small hypercube centered at the previous optimum point from the $(m{-}1)$-th step. The formal statement is presented in Appendix~\ref{sec:ws_statement}, and the full proof is provided in Supplementary material.

\section{Results}

\subsection{Demonstration on the quantum processor: Two-qubit Ising model}%with lower bound 
\label{sec:Ising}

We experimentally demonstrate the algorithm, executing it on the quantum processor. In the experiment, the algorithm captures the dynamics of multiple states within a subspace for a transverse- and longitudinal-field Ising model, described by the Hamiltonian: 
\begin{equation}\label{eq:ham_Ising}
H= -J\sum_{j=1}^{N-1} \sigma_j^x\sigma_{j+1}^x -g\sum_{j=1}^N\sigma_j^z -h\sum_{j=1}^N\sigma_j^x,
\end{equation}
where $N$ is the number of spins (i.e., qubits), and $J$ is the nearest-neighbor spin-spin coupling strength. The parameters $g$ and $h$ represent the strengths of the external magnetic field in the transverse and longitudinal directions, respectively. Here, we set $J = 1$, so all energy scales are measured in units of $J$.

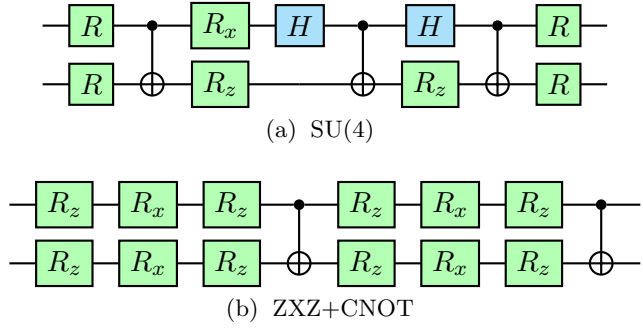
\begin{figure}[h]
\begin{center}
\subfigure[\,SU(4)]{ \begin{quantikz}[column sep =10, row sep = 5]
    \qw&\gate[style={fill=green!30}]{R}&\ctrl{1}&\gate[style={fill=green!30}]{R_x}&\gate[style={fill=cyan!30}]{H}&\ctrl{1}&\gate[style={fill=cyan!30}]{H}&\ctrl{1}&\gate[style={fill=green!30}]{R}&\qw\\
    \qw& \gate[style={fill=green!30}]{R}&\targ{}&\gate[style={fill=green!30}]{R_z}&\qw&\targ{}&\gate[style={fill=green!30}]{R_z}&\targ{}&\gate[style={fill=green!30}]{R}&\qw
    \end{quantikz}\label{fig:ans_su4}}
 \subfigure[\,ZXZ+CNOT]{ \begin{quantikz}[column sep =10, row sep = 5]
    \qw&\gate[style={fill=green!30}]{R_z}&\gate[style={fill=green!30}]{R_x}&\gate[style={fill=green!30}]{R_z}&\ctrl{1}&\gate[style={fill=green!30}]{R_z}&\gate[style={fill=green!30}]{R_x}&\gate[style={fill=green!30}]{R_z}&\ctrl{1}&\qw\\
     \qw&\gate[style={fill=green!30}]{R_z}&\gate[style={fill=green!30}]{R_x}&\gate[style={fill=green!30}]{R_z}&\targ{}&\gate[style={fill=green!30}]{R_z}&\gate[style={fill=green!30}]{R_x}&\gate[style={fill=green!30}]{R_z}&\targ{}&\qw
    \end{quantikz}\label{fig:ans_zxz}}
         \end{center}
\caption{\label{fig:ans_2q}PQCs used in the experiment. The $R$ gate denotes an Euler rotation with three parameters, and the $R_i$ gate is a single-qubit rotation about the $i$-axis. (a) An ansatz called SU(4), which represents all possible $4 \times 4$ unitary matrices up to a global phase, with 15 variational parameters. (b) A double-layered ansatz consisting of single-qubit rotations along the $z$, $x$, and $z$ axes, followed by CNOT gates, referred to as the $ZXZ$+CNOT ansatz.}
\end{figure}

The model we examine has $N = 2$ with $g = h = 1$. The target subspace for simulation is chosen to be spanned by ${|00\rangle, |11\rangle}$; accordingly, three states $|00\rangle$, $|11\rangle$, and $|+\rangle \coloneqq (|00\rangle + |11\rangle)/\sqrt{2}$ are selected as initial states.
We train two types of PQCs---the SU(4) and ZXZ+CNOT ansatz---as presented in Figure~\ref{fig:ans_2q}. For training, we use sequential minimal optimization \cite{nakanishi2020sequential}, with a halting condition of $10^{-3}$ to minimize the cost function in Eq.~\eqref{eq:cost_m} at each time step $m$.

% \begin{figure}[h]

%         \centering

%     \subfigure[]{\includegraphics[width=4.15cm]{Figure/fid_su4_simul_exp.pdf}\label{fig:2q_su4_s_e}}
%     \subfigure[]{\includegraphics[width=4.15cm]{Figure/fid_zxz_simul_exp.pdf}\label{fig:2q_zxz_s_e}}
%    \subfigure[]{ \includegraphics[width=4.15cm]{Figure/fid_su4_random_lb.pdf}\label{fig:2q_su4_r_l}}
%    \subfigure[]{ \includegraphics[width=4.15cm]{Figure/fid_zxz_random_lb.pdf}\label{fig:2q_zxz_r_l}}

% \caption{\label{fig:2q_exp} Fidelities between the states evolved under the Trotter circuit and the parametrized states produced by the trained PQC. (a) and (b) show the fidelities of the three representative initial states—$\{|00\rangle, |11\rangle, (|00\rangle+|11\rangle)/\sqrt{2}\}$—obtained using the SU(4) and ZXZ+CNOT ansatz trained on the IBMQ (Yonsei) processor, respectively. For reference, training results from numerical simulations are also presented as dotted lines. (c) and (d) depict the fidelities of 500 randomly sampled initial states within the subspace, shown as transparent yellow lines. The fidelities are obtained based on the trained SU(4), and ZXZ+CNOT ansatz, respectively. The fidelity lower bounds---given in Proposition~\ref{prop:lb_m}---for the three representative states are also shown as dashed lines.}

% \end{figure}

\begin{figure}[h]

        \centering

    \subfigure[]{\includegraphics[width=4.25cm]{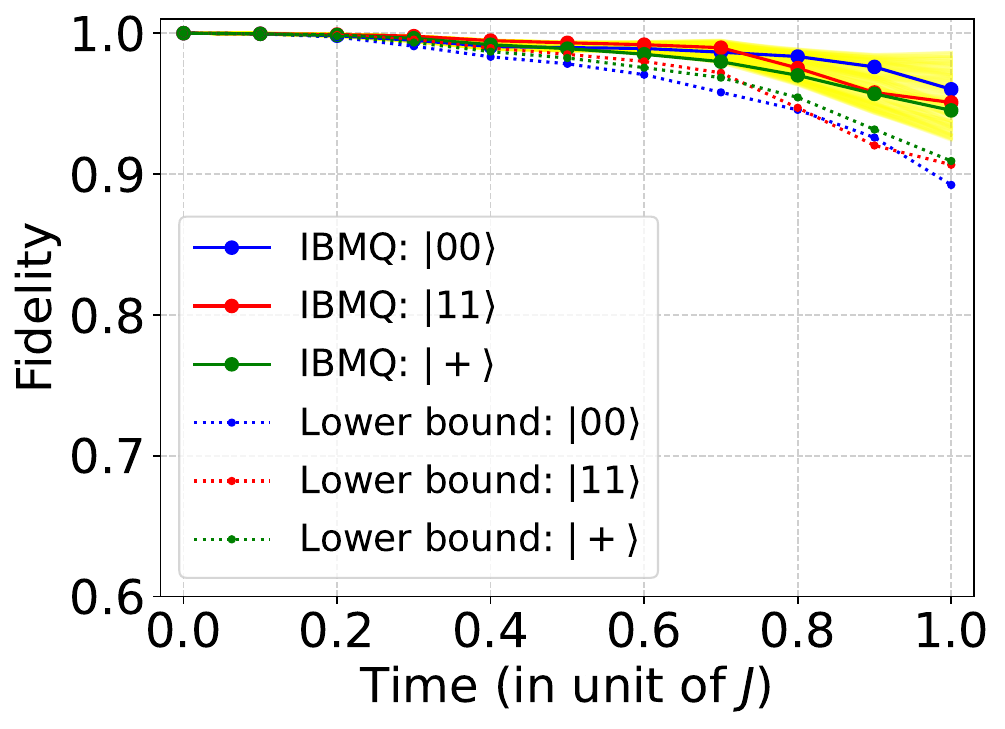}\label{fig:2q_su4}}
    \subfigure[]{\includegraphics[width=4.25cm]{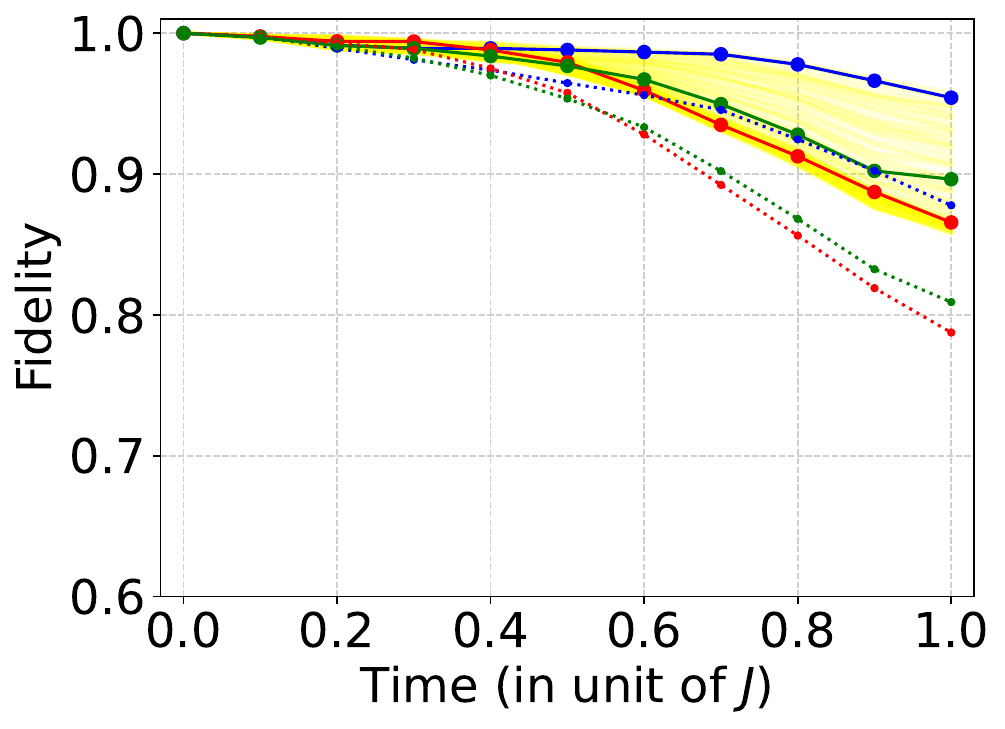}\label{fig:2q_zxz}}
\caption{\label{fig:2q_exp} Fidelities between the states evolved under the Trotter circuit and the parameterized states produced by the trained PQC. Solid lines show the fidelities of the three training target states, while dotted lines indicate their corresponding lower bounds from Proposition~\ref{prop:lb_m}. Transparent yellow lines represent the fidelities of 500 randomly sampled initial states within the subspace. (a) shows the results with the SU(4) ansatz, and (b) those with the ZXZ+CNOT ansatz. }

\end{figure}

The results of the experiment are presented in Figure~\ref{fig:2q_exp}, which demonstrates how well the trained PQCs capture the Trotter time evolution within the subspace, as measured by the fidelity between $T(\delta t)^m|\Psi\rangle$ and $|\Psi(\bar{\bm{\phi}}_m)\rangle$ at each time step $m$. As shown in the figures, the PQCs trained on the quantum processor successfully capture the time evolution of the three initial states, with the lowest fidelity remaining above 0.94 and 0.86 for the SU(4) and ZXZ+CNOT ansatz, respectively. The main interest of our study is to reproduce the dynamics of arbitrary states within a given subspace. In this context, we also examine the fidelities of 500 randomly sampled states within the subspace spanned by $\{|00\rangle, |11\rangle\}$, shown as transparent yellow lines. These fidelities are numerically computed using the optimized parameters obtained from the experiment. Notably, the dynamics of random states are well reproduced—comparable to those of the three representative states---even though their fidelity lower bounds can be significantly smaller, indicating that the PQCs are effectively trained. 

% While the simulation results represent the ideal case under the same optimization setup, the experimental results exhibit better fidelities for some states. This is because fluctuations due to noise can affect the optimization path, potentially enabling it to converge to lower minima at each iteration. 

\begin{figure*}[t]
\begin{center}
\subfigure[]{ \includegraphics[width=5.5cm]{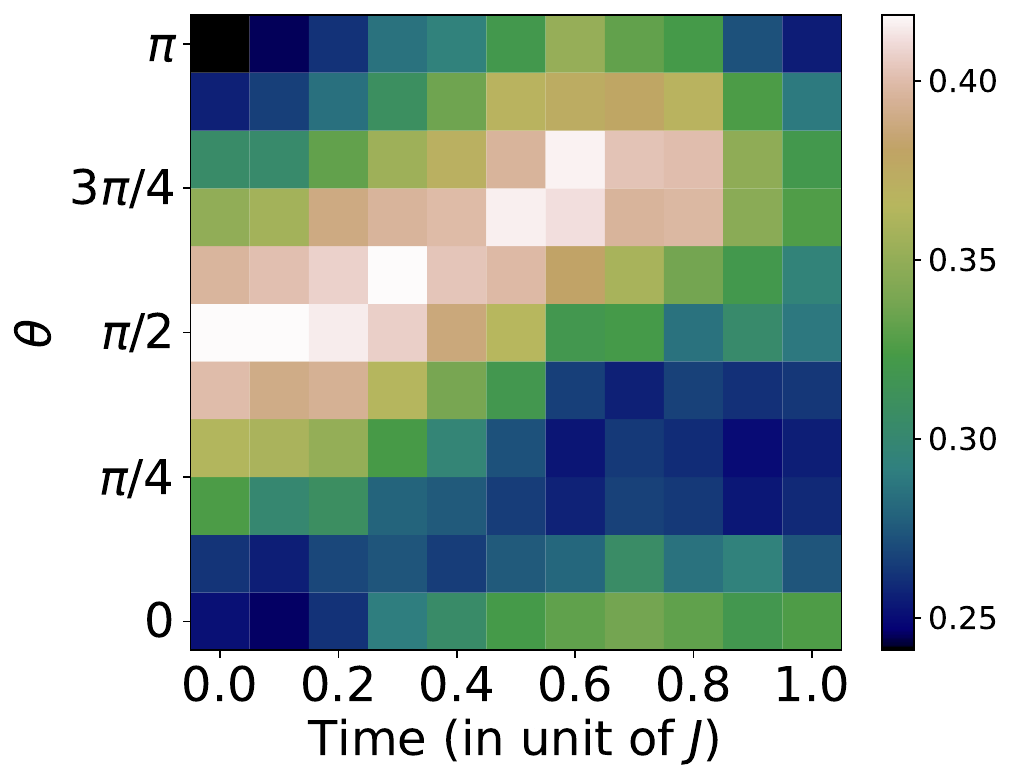}\label{fig:ent_pqc}}
 \subfigure[]{ \includegraphics[width=5.5cm]{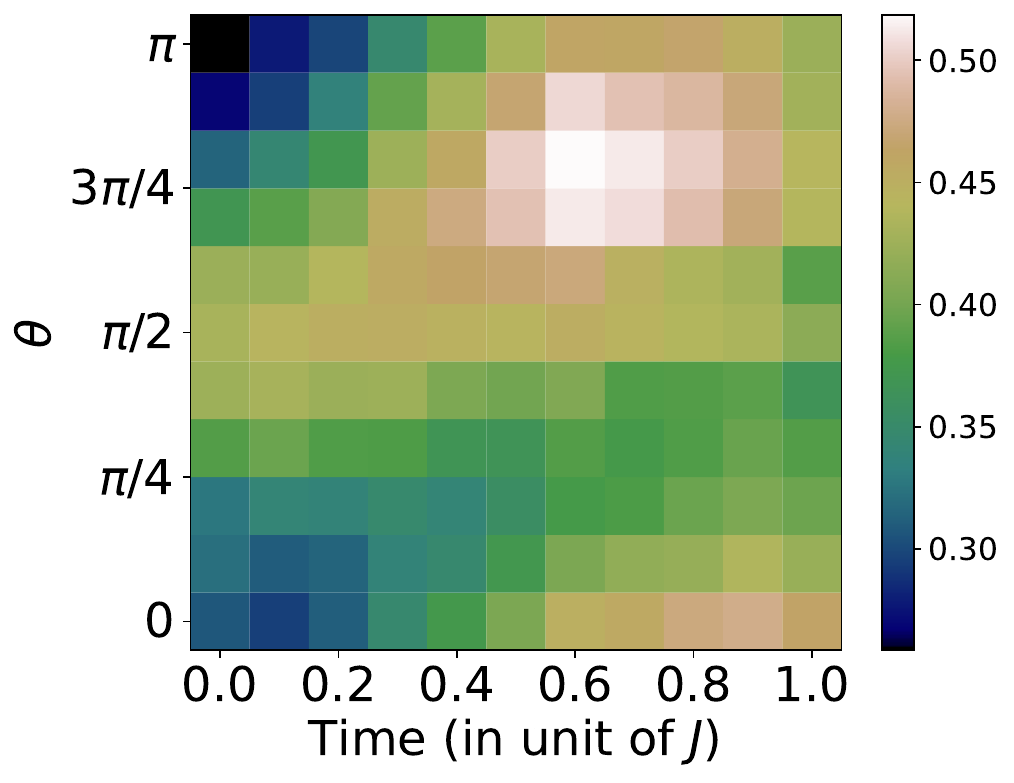}\label{fig:ent_trott}}
  \subfigure[]{ \includegraphics[width=5.5cm]{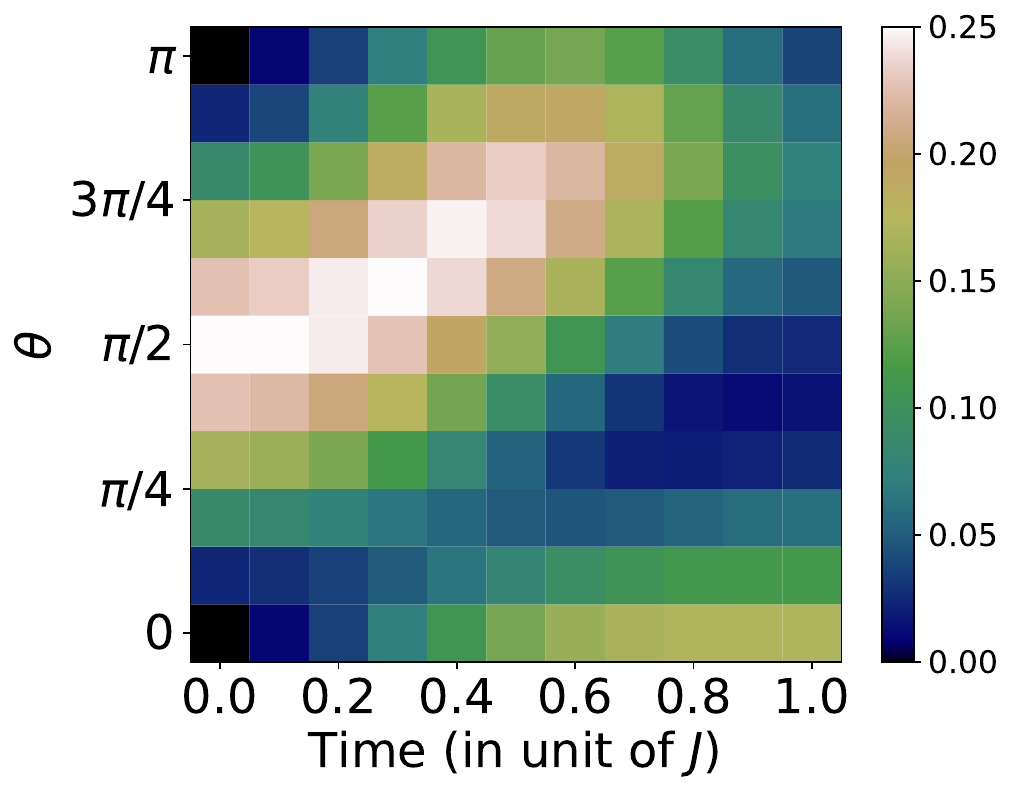}\label{fig:ent_numeric}}
         \end{center}
\caption{\label{fig:ent}Quantum simulation of the dynamics of concentratable entanglement, for multiple initial states $|\psi_0(\theta)\rangle$. Time evolution is implemented using (a) a pre-trained PQC with a SU(4) ansatz, (b) a second-order Trotter circuit, and (c) the ideal time evolution operator. Simulations in (a) and (b) are carried out on the quantum processor, while (c) is obtained by numerical calculation.}
\end{figure*}

In addition, using the trained PQC, we perform a quantum simulation of the entanglement dynamics of multiple states as an application of the algorithm. Specifically, we simulate the dynamics of concentratable entanglement \cite{beckey2021computable}---defined in a form amenable to circuit-based measurement---on the quantum processor using a 6-qubit circuit. The trained SU(4) ansatz is used to simulate the entanglement dynamics, and for comparison, we also simulate the same dynamics using the Trotter circuit. In this simulation, the concentratable entanglement is measured over time for the initial state $|\psi_0(\theta)\rangle \coloneqq\cos(\theta/2)|00\rangle + \sin(\theta/2)|11\rangle$.

The results of the entanglement dynamics simulation are presented in Figure~\ref{fig:ent}. One can see that the result in Figure~\ref{fig:ent_trott} is of relatively lower quality compared to that in Figure~\ref{fig:ent_pqc}, aside from a constant bias, with the quality gap widening over time, due to the increasing circuit depth required for Trotter time evolution. In contrast, the simulation using the PQC does not suffer from such depth-related degradation and maintains consistent performance across all time steps. Note that the PQC used to simulate time evolution of $|\psi_0(\theta)\rangle$ was trained on only three states: $|00\rangle,|11\rangle,|+\rangle$ (corresponding to $\theta = 0, \frac{\pi}{2}, \pi$). Even though the time evolution of the entire state set $\{\psi_0(\theta)\rangle\}$ is reproduced without separate optimization for each state, it still closely follows the target evolution (as also seen in Figure~\ref{fig:2q_su4}), thus successfully reproduce the entanglement dynamics over the multiple initial states.

\subsection{Simulation: 10-qubit Ising model}
\label{sec:10Ising}
For further validation of our algorithm, we perform a numerical simulation of the algorithm on the 10-qubit transverse- and longitudinal-field Ising model, which corresponds to the $N=10$ case of Eq.~\eqref{eq:ham_Ising}. As in Section~\ref{sec:Ising}, setting $J=1$ thus all energy scales are dimensionless in units of $J$, and $g=h=1$. 

To approximate the Trotter circuit for the $10$-qubit Ising model, we use an ansatz that sequentially stacks 2-qubit SU(4) PQCs across all $10$ qubits, as depicted in Figure~\ref{fig:ans_10q}. Defining a layer as one cycle in which nine SU(4) blocks sequentially cover the entire qubit wire, we simulate the algorithm using both single- and double-layer configurations. The PQCs are trained for the Trotter time evolution of the subspace spanned by $\{|00\cdots 0\rangle, |11\cdots 1\rangle\}$, thus the initial states are given by $|00\cdots 0\rangle$, $|11\cdots 1\rangle$ and $\frac{1}{\sqrt{2}}(|00\cdots 0\rangle+|11\cdots 1\rangle)$. For training, we use the stochastic gradient descent (SGD) optimizer with learning rate $0.1$, implemented in the Optax library \cite{deepmind2020jax}. 

\begin{figure}[h]
\begin{center}
\begin{quantikz}[column sep =10, row sep = 1]
    \qw&\gate[2,style={inner ysep=-6pt, fill=green!30}]{\text{SU}(4)}&\qw&\qw&\qw&\qw&\qw\\
    \qw&&\gate[2,style={inner ysep=-6pt, fill=green!30}]{\text{SU}(4)}&\qw&\qw&\qw&\qw\\\qw&\qw&&\gate[2,style={inner ysep=-6pt, fill=green!30}]{\text{SU}(4)}&\qw&\qw&\qw\\\qw&\qw&\qw&&\qw&\qw&\qw\\&&&&\ddots&&%\\\qw&\qw&\qw&\qw&\qw&\gate[2,style={inner ysep=-6pt}]{\text{SU}(4)}&\qw\\\qw&\qw&\qw&\qw&\qw&&\qw
    \end{quantikz}
         \end{center}
\caption{\label{fig:ans_10q}The single-layer ansatz used to capture the Trotter time evolution of the 10-qubit Ising system. Each 2-qubit SU(4) block is shown in Figure~\ref{fig:ans_su4}, thus, a total of 15$\times (n-1)$ parameters are encoded in the ansatz per a layer for an $n$-qubit system. }
\end{figure}
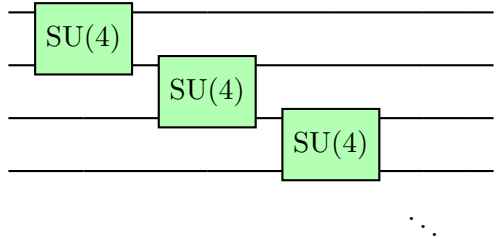

\begin{figure}[h]
\begin{center}
\subfigure[]{ \includegraphics[width=4.1cm]{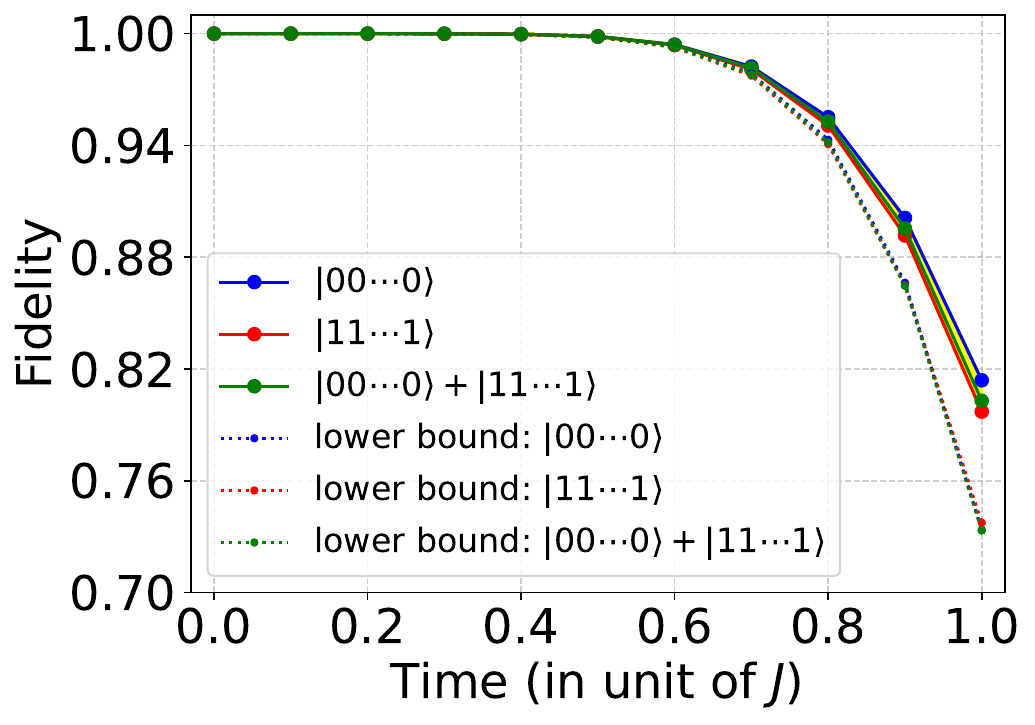}}\label{fig:10_q_Ising_layer_1}
 \subfigure[]{ \includegraphics[width=4.2cm]{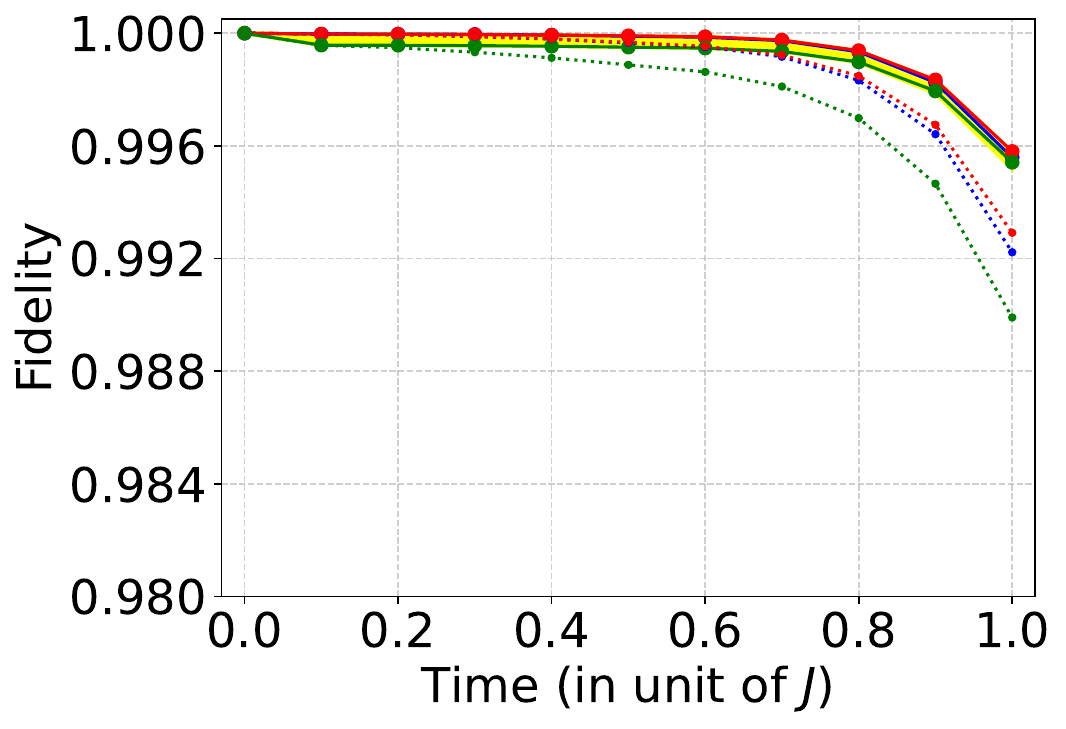}}\label{fig:10_q_Ising_layer_2}
         \end{center}
\caption{Fidelities---and their lower bounds as presented in Proposition~\ref{prop:lb_m}---between the states evolved under the Trotter circuit and those approximated by PQCs, for three representative initial states. Additionally, fidelities for 500 random initial states within the subspace spanned by $\{|00\cdots0\rangle, |11\cdots1\rangle\}$ are shown as transparent yellow lines. (a) Using a single-layer ansatz (b) Using a double-layer anstaz.}
\label{fig:10_q_Ising}
\end{figure}

The results of the simulation are presented in Figure~\ref{fig:10_q_Ising}, demonstrating that the algorithm is capable of capturing the dynamics of arbitrary states within the subspace spanned by $\{|00\cdots0\rangle, |11\cdots1\rangle\}$, even for large systems, as shown by the fidelities. We first note that the single-layer PQC exhibits a noticeable drop in fidelity over time, whereas the double-layer PQC maintains an error below 1\% throughout the later time steps. This indicates that a deeper circuit depth is required to accurately simulate quantum systems over longer evolution times, resembling the general requirement that circuit depth grows linearly with simulation time, in product-formula-based quantum simulation \cite{childs2019nearly, haah2021quantum, barison2021efficient}. In both cases, the trained PQCs successfully reproduce the dynamics of arbitrary states within the subspace, as well as those of the three representative states directly used as training targets. This example shows that the algorithm can accurately simulate Trotter time evolution across multiple states in a given subspace, for larger systems, provided that the ansatz has sufficient expressibility.

\section{Discussion}\label{sec:discussion}

In this paper, we propose a method to simulate the Trotter time evolution of a subspace using PQCs. The subspace can be arbitrarily chosen by preparing a set of orthonormal initial states that serve as its basis. After training the PQC with the $2d-1$ initial states to capture the $d$ columns of the matrix representation of the Trotter circuit (restricted to the subspace) in the given basis, the PQC can reproduce the dynamics of any linear combination of the basis states. Notably, a single training procedure suffices to reproduce the dynamics of arbitrary states within the subspace, whereas algorithms for single state require retraining for each new initial state. Furthermore, the training landscape of the algorithm provides warm-start---barren-plateau-free---regions for each iteration. While this warm-start property has been proven for iterative algorithms restricted to the single-state training \cite{puig2025variational}, we extend the theorem to our algorithm, which addresses the multi-state optimization setting. Consequently, our algorithm maintains sufficient trainability for large systems, whereas VQAs often fail due to gradients that vanish exponentially with system size.

The success of our algorithm is measured by fidelity. In this context, the fidelity lower bound for states within the given subspace provides a performance guarantee for the worst-case training scenario of the PQC. While computing this bound is generally NP-hard in the framework of QCQP, we relax the problem and introduce an SDP-based method for efficiently calculating the fidelity lower bound between arbitrary states (within the subspace) evolved under the Trotter circuit and those produced by the trained PQC. This bound is evaluated directly from the cost function values naturally obtained during the training process.

The algorithm is experimentally demonstrated by simulating the dynamics of $2$-qubit transverse- and longitudinal-field Ising model on the Eagel processor \texttt{ibm\_yonsei}. The PQC trained using the QPU successfully reproduces the time evolution within the two-dimensional subspace, achieving fidelities above 0.9 for arbitrary states when using an ansatz with sufficient expressibility, despite the presence of noise on the device. Furthermore, using the trained PQC, we experimentally investigate the dynamics of entanglement within the subspace on the same QPU. Although the effect of noise is more pronounced due to the use of six qubits, the PQC captures the overall trend of the entanglement dynamics. We additionally provide numerical simulation results by executing the algorithm on the 10-qubit Ising model, demonstrating its capability for simulating large-scale quantum systems.

Since the fidelity lower bound can be computed instantaneously during the training process, one can monitor whether the PQC reproduces the target dynamics with the desired accuracy at each iteration. A natural strategy is to increase the expressibility of the PQC by adding additional layers when the fidelity lower bound falls below a prescribed accuracy threshold. Such a modification can affect certain properties of the algorithm (e.g., the warm-start property) and therefore requires further careful investigation.

Beyond simulating the dynamics of quantum states, the algorithm can serve as a versatile subroutine to improve the performance of other computational methods. In particular, it can facilitate the preparation of multiple Krylov basis states \cite{stair2020multireference,oliveira2025quantum} by exploiting its expressibility and trainability (as discussed in Section~\ref{sec:ws}) within the constraints of near-term quantum devices. These features may, in turn, enable a variety of applications, including solving eigensolver problems \cite{cortes2022fast} and simulating long-time quantum dynamics \cite{yoshioka2025krylov}.

\appendix
\section{Proofs of Section~\ref{sec:lb}}\label{sec:proof_lb}

\subsection{Proof of Proposition~\ref{prop:lb_m}}
\begin{proof}
While the fidelity between any two states $|\psi\rangle$, $|\psi'\rangle$ is not a metric, the angle, $\Theta(|\psi\rangle,|\psi'\rangle)\coloneqq \arccos |\langle\psi|\psi'\rangle|$ obeys the triangle inequality \cite{Nielsen_Chuang_2010}. By using the triangle inequality, it is straightforward to obtain
\begin{widetext}
\begin{align}
    \Theta\left(|\Psi_i(\bar{\bm{\phi}}_m)\rangle,T(\delta t)^m|\Psi_i\rangle\right) \leq \Theta\left(|\Psi_i(\bar{\bm{\phi}}_m)\rangle,T(\delta t)^{m-1}|\Psi_i(\bar{\bm{\phi}}_1)\rangle\right)+\Theta\left(|\Psi_i(\bar{\bm{\phi}}_1)\rangle,T(\delta t)|\Psi_i\rangle\right),   
   \end{align}
\end{widetext}
for all $i$. Repeating this sequentially, we obtain the relation Eq.~\eqref{eq:lowerbound_m}.
\end{proof}

\subsection{Proof of Theorem~\ref{thm:lb}}\label{sec:pf_sd}
\begin{proof}
By the primal-dual interior-point path following method \cite{alizadeh1998primal}, the problem Eq.~\eqref{eq:sdr} can be solved in $O(d^7)$ time per iteration, or more efficiently with faster algorithms \cite{jiang2020faster}.

The strong duality is established by constructing a feasible $F$ under the condition $F_\alpha, F_\beta^+ < 1$ for all $\alpha, \beta$ as follows:

\begin{align}
[F]_{ij,kq} = \begin{cases}
F_{\text{max}} & \text{if}\; i = j = k = q \\
F_{\text{max}}-\epsilon & \text{if}\; i=j,\;k=q\;\text{and} \;i\neq k\\
\epsilon & \text{if}\; i=k,\;j=q\;\text{and}\;i\neq j\\
0 &\text{otherwise}
\end{cases} ,\nonumber\\\left(F_{\text{max}}=\text{max}(\{F_\alpha\},\{F_\beta^+\})\;\;,\;\;\;0<\epsilon<\frac{1-F_{\text{max}}}{d-1}\right).
\end{align}

This $F$ is strictly feasible---i.e., it is positive definite and satisfies all constraints---thus satisfying Slater’s condition. Consequently, the semidefinite program admits strong duality.
\end{proof}

\section{Comparison with the fewer state optimization}
\label{sec:comp_fewer}

The key difference between our algorithm and previously studied iterative algorithms for single-state simulation \cite{berthusen2022quantum, lin2021real} lies in the choice of target initial states used in the cost function Eq.~\eqref{eq:cost_m}. As explained in Section~\ref{sec:framework}, to train the sub-blocks (restricted to the target subspace) of the Trotter matrix, the algorithm requires not only the basis states $\{|\Psi_i\rangle\}$ but also their linear combinations $\{|\Psi_i^+\rangle\}$. Here, we examine the necessity of including these states for reproducing the dynamics within a given subspace, based on the simulation results under the same setting as in Section~\ref{sec:Ising}. We train the PQCs---SU(4) and ZXZ+CNOT ansatz---for three different sets of target states: $\{|00\rangle, |11\rangle, |+\rangle\}$, $\{|00\rangle, |11\rangle\}$, and $\{|00\rangle\}$. We again use the sequential minimal optimization \cite{nakanishi2020sequential}, but with a tighter halting condition of $10^{-4}$ for the simulation. 

\begin{figure*}[t]

     \subfigure[]{\includegraphics[width=4.3cm]{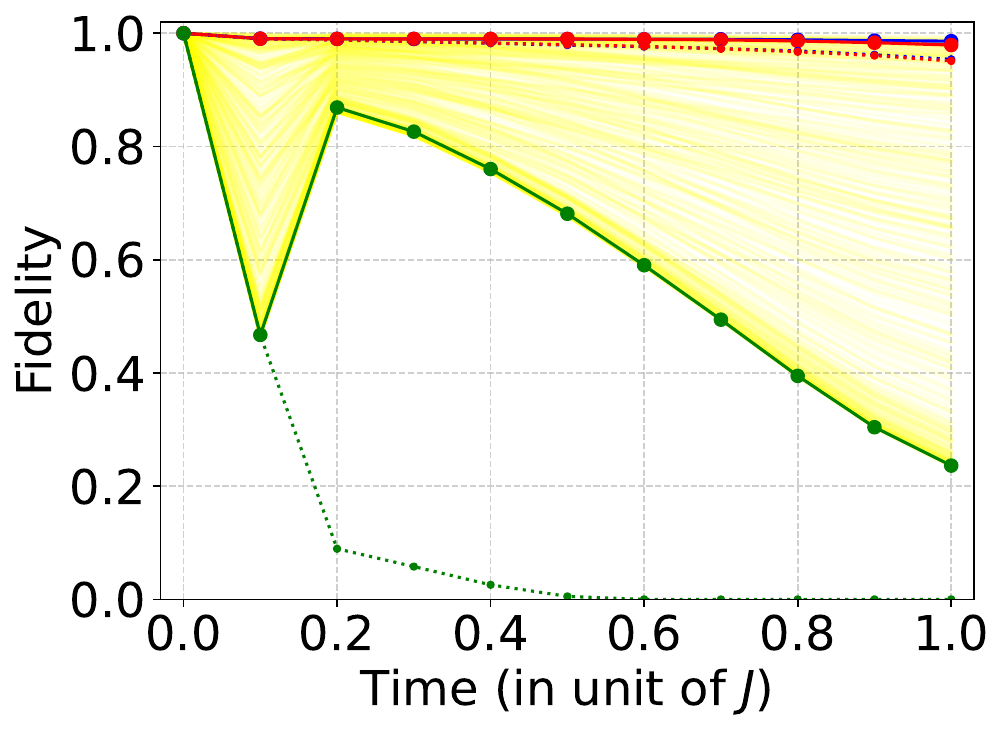}\label{fig:su4_double}}
     \subfigure[]{\includegraphics[width=4.3cm]{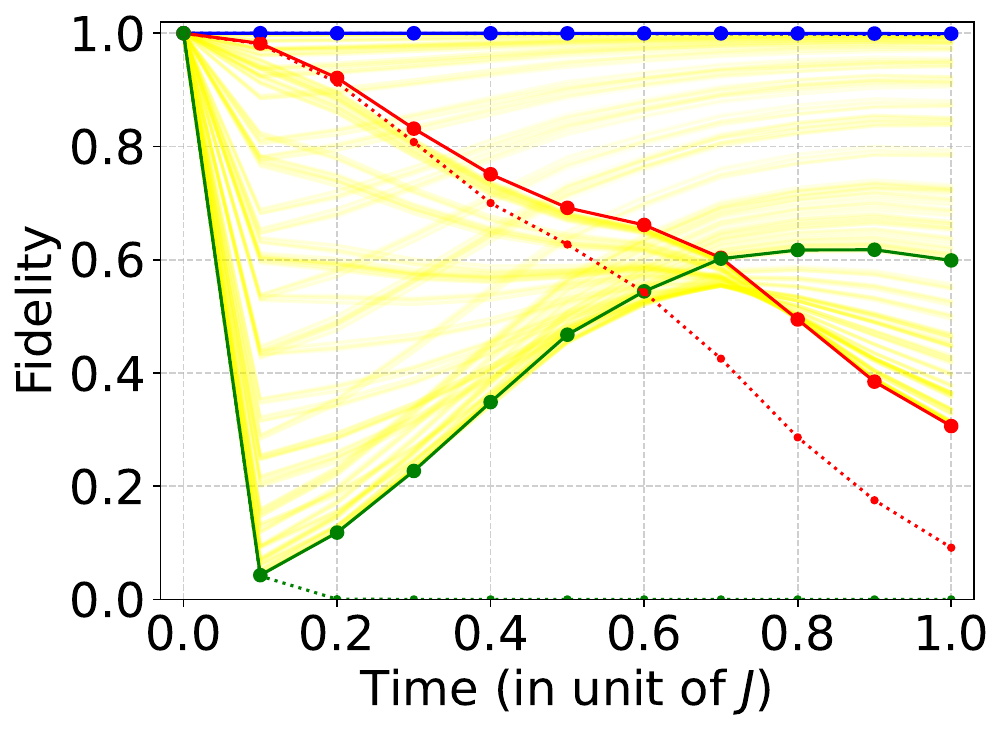}\label{fig:su4_single}}
     \subfigure[]{\includegraphics[width=4.3cm]{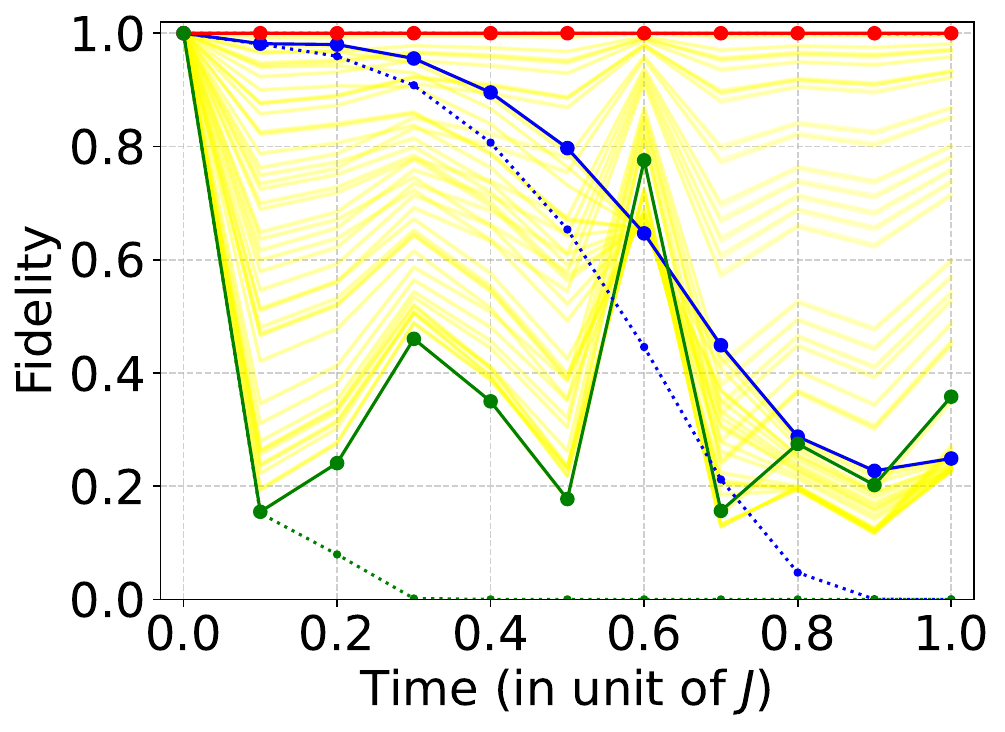}\label{fig:su4_single_11}}
  
        \subfigure[]{ \includegraphics[width=4.3cm]{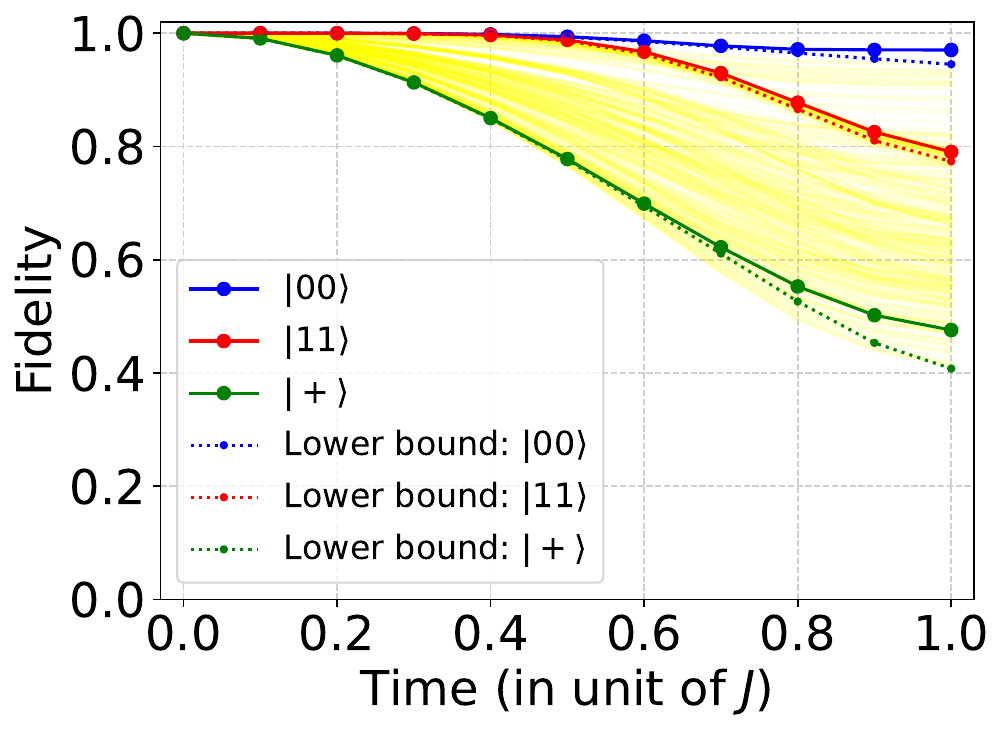}\label{fig:zxz_double}}
      \subfigure[]{ \includegraphics[width=4.3cm]{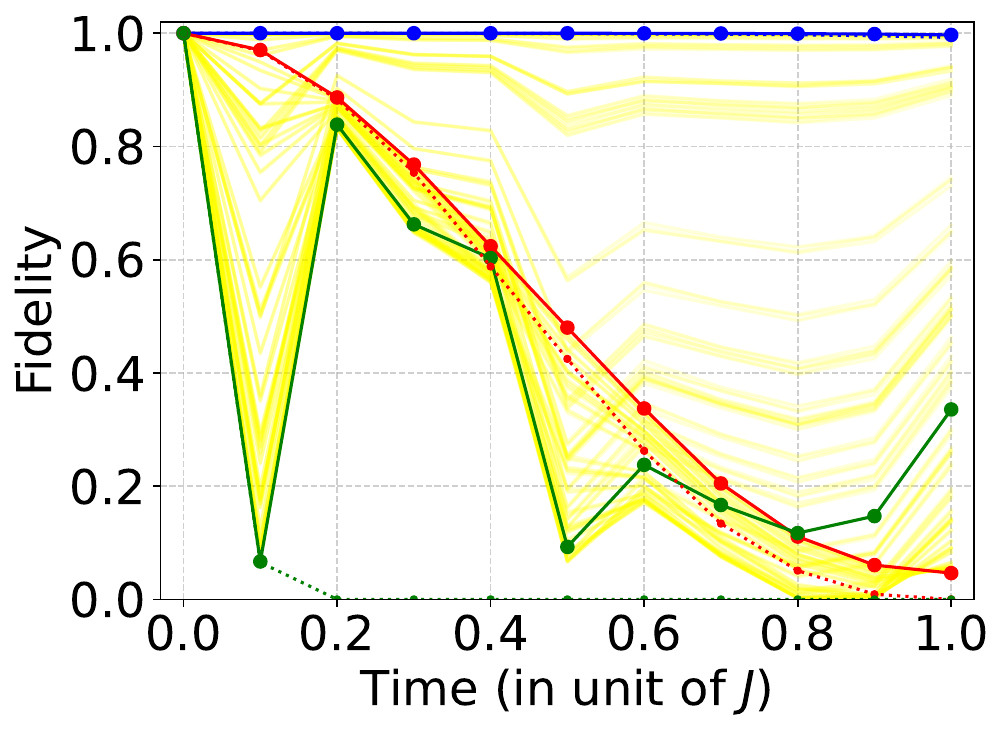}\label{fig:zxz_single}}
    \subfigure[]{\includegraphics[width=4.3cm]{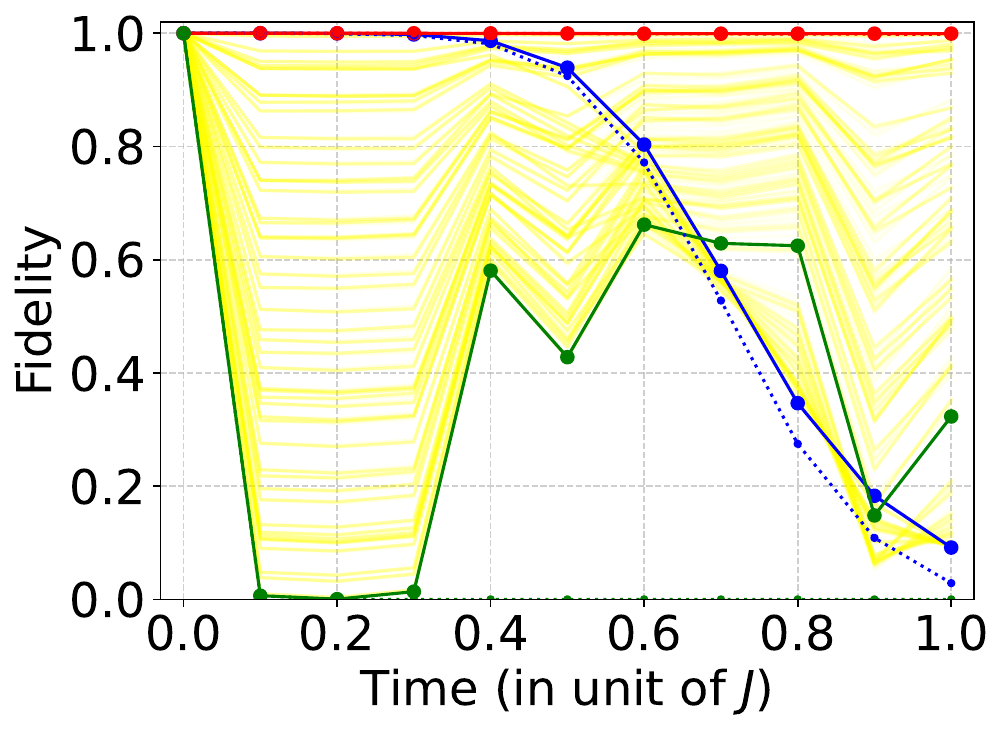}\label{fig:zxz_single_11}}
\caption{\label{fig:2q_fewer} Fidelities between the states evolved under the Trotter circuit and the parameterized states produced by the trained PQC. The fidelities of the three representative states $|00\rangle$, $|11\rangle$, and $|+\rangle$ are shown as solid lines, and their corresponding fidelity lower bounds—given in Proposition~\ref{prop:lb_m}—are shown as dotted lines. Additionally, the fidelities of 500 randomly sampled initial states within the subspace are displayed as transparent yellow lines. (a), (b) and (c) show the results using the SU(4) ansatz, trained with the sets of initial states $\{|00\rangle, |11\rangle\}$, $\{|00\rangle\}$ and $\{|11\rangle\}$ respectively. (d), (e) and (f) present the corresponding results using the ZXZ+CNOT ansatz.}
\end{figure*}

Figure~\ref{fig:2q_fewer} demonstrates the necessity of the entire target states $\{|00\rangle, |11\rangle, |+\rangle\}$. As shown in Figures~\ref{fig:su4_double} and~\ref{fig:zxz_double}, the lack of the $|+\rangle$ state leads to incorrect training of the subspace dynamics, whereas the fidelities of the two target states $|00\rangle$ and $|11\rangle$ remain high. The gap in training quality is clearly demonstrated when compared to the results in Figure~\ref{fig:2q_exp}, which exhibit significantly better performance despite being obtained on a noisy device with a weaker halting condition. As expected, training with only the single state $|00\rangle$ fails to reproduce the subspace dynamics---performing even worse than using two states---as shown in Figures~\ref{fig:su4_single} and~\ref{fig:zxz_single}.

From this example, one can observe a trade-off between trainability and the dimension of the target subspace. Note that the ZXZ+CNOT ansatz possesses sufficient expressibility to reproduce the dynamics of individual states such as $|00\rangle$ and $|11\rangle$, as shown in Figures~\ref{fig:zxz_single} and~\ref{fig:zxz_single_11}. However, as demonstrated in Figure~\ref{fig:zxz_double} (and Figure~\ref{fig:2q_zxz}), when the training involves a larger set of target states, the ansatz fails to accurately capture the dynamics across the full subspace. In contrast, the SU(4) ansatz does not exhibit this limitation, as presented in Figure~\ref{fig:su4_double} (and Figure~\ref{fig:2q_su4}). This suggests that an ansatz with sufficiently high expressibility can overcome the trade-off.

\section{Formal statements of the barren-plateau-free regions}
\label{sec:ws_statement}

Here, we present explicit statements about the warm start given in Section~\ref{sec:ws}, including all relevant mathematical conditions. In the following, we focus on the $2$-dimensional case, with the initial states $|\Psi_0\rangle,|\Psi_1\rangle$, and $|\Psi_1^+\rangle$. The extension to higher dimensional cases is straightforward by analogy. While we have utilized the Trotter approximation as a reference for time evolution for practical reasons, the algorithm itself does not rely on any specific quantum simulation circuit to generate the dynamics. In this statement, we use the exact time evolution operator $\mathcal{T}(\delta t)\coloneqq e^{-iH\delta t}$ to demonstrate the warm-start property in a more fundamental setting. The warm-start within the setting using the Trotter approximation can also be established in a similar way, by incorporating an additional analysis of the Trotter error \cite{yi2022spectral}. 

Some notations are given as follows: The cost function composed of the three states in the statement is denoted by 
\begin{equation}
C(\bm{\phi})=\frac{1}{3}(C_0(\bm{\phi})+C_1(\bm{\phi})+C_+(\bm{\phi})),
\end{equation}
where the subscript refers to the state index, not the iteration number, since the analysis is independent of the iteration step. Each term is defined as $C_j(\bm{\phi})=1-|\langle\psi_j|U^\dagger(\bm{\phi})\mathcal{T}(\delta t)U(\bar{\bm{\phi}})|\psi_j\rangle|^2$. The PQC with $M$-parameters is described by
\begin{equation}
U(\bm{\phi})=\prod_{i=1}^MW_i U_i(\phi_i)
\end{equation}
where $U_i(\phi_i)=e^{-i\phi_i\sigma_i}$, with $\sigma_i$ being a Pauli string associated with the $i$-th gate, and $W_i$ is a fixed gate. The set $\mathcal{V}$, as it appears in the statement, denotes a hypercube in parameter space, defined as $\mathcal{V}(\bm{\phi},r)=\{\bm{\theta}\,|\,\theta_i\in[\phi_i-r,\phi_i+r]\;\forall i\}$ and $\mathcal{D}(\bm{\phi},r)$ is the uniform distribution over the hypercube $\mathcal{V}(\bm{\phi},r)$. The system size is $n$ qubits.

\begin{proposition}
\label{prop:var}
The variance of the cost function $C(\bm{\phi})$ over the hypercube $\mathcal{V}(\bar{\bm{\phi}},r)$, with the pre-optimized parameters $\bar{\bm{\phi}}$, is lower bounded as 
\begin{widetext}
\begin{equation}\label{eq:lb_primal}
\textup{Var}_{\bm{\phi}\sim\mathcal{D}(\bar{\bm{\phi}},r)}[C(\bm{\phi})]\geq (c_+-k_+^2)\underset{\xi\in[-1,1]}{\textup{min}}\left(k_+^{M-1}\Delta_{\bar{\bm{\phi}}}+(1-k_+^{M-1})\xi\right)^2,
\end{equation}
where 
\begin{align}
c_+&\coloneqq\mathbb{E}_{\alpha\sim\mathcal{D}(0,r)}[\cos^4\alpha]\\k_+&\coloneqq\mathbb{E}_{\alpha\sim\mathcal{D}(0,r)}[\cos^2\alpha]\\\Delta_{\bar{\bm{\phi}}}&\coloneqq\frac{1}{3}\sum_{j\in\{0,1,+\}}\textup{Tr} [(\rho_j-\sigma_1\rho_j\sigma_1)U^\dagger(\bar{\bm{\phi}})\rho_{j,(\bar{\bm{\phi}},
\delta t)}U(\bar{\bm{\phi}})].
\end{align}
\end{widetext}
Here, the density matrices are defined as $\rho_j\coloneqq|\psi_j\rangle\langle\psi_j|$ and $\rho_{j,(\bar{\bm{\phi}},\delta t)}\coloneqq\mathcal{T}(\delta t)U(\bar{\bm{\phi}})\rho_j U^\dagger(\bar{\bm{\phi}})\mathcal{T}^\dagger(\delta t)$.

\end{proposition}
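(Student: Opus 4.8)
The plan is to reduce the multi-state variance to a single-parameter computation that inherits the single-state warm-start structure of Ref.~\cite{puig2025variational}, and to absorb the sum over states using the linearity of $C$ in its three fidelity terms. First I would rewrite each fidelity as an operator overlap, $F_j(\bm{\phi})=\textup{Tr}[\rho_j U^\dagger(\bm{\phi})\rho_{j,(\bar{\bm{\phi}},\delta t)}U(\bm{\phi})]$, and single out the first parameterized gate $U_1(\phi_1)=e^{-i\phi_1\sigma_1}$. Writing $\phi_1=\bar\phi_1+\alpha$ and using $\sigma_1^2=I$, the conjugation $U_1(\alpha)\rho_j U_1^\dagger(\alpha)=\cos^2\alpha\,\rho_j+\sin^2\alpha\,\sigma_1\rho_j\sigma_1-i\cos\alpha\sin\alpha[\sigma_1,\rho_j]$ shows that each $C_j$, as a function of $\alpha$ with the remaining parameters fixed, takes the form $C_j=P_j+Q_j\cos^2\alpha+R_j\cos\alpha\sin\alpha$. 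The point that makes the multi-state case tractable is that $C=\frac13\sum_j C_j$ is linear in these terms, so its $\cos^2\alpha$-coefficient is simply $Q=\frac13\sum_j Q_j$; no covariance between distinct $C_j$ ever needs to be controlled.

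Then I would invoke the law of total variance, $\textup{Var}_{\bm{\phi}}[C]\ge\mathbb{E}_{\phi_{2:M}}\!\left[\textup{Var}_{\phi_1}[C\mid\phi_{2:M}]\right]$, to drop to a single-parameter variance. A direct moment computation over $\alpha\sim\mathcal{D}(0,r)$ gives $\textup{Var}_\alpha[\cos^2\alpha]=c_+-k_+^2$, $\textup{Cov}_\alpha[\cos^2\alpha,\cos\alpha\sin\alpha]=0$ (odd integrand), and $\textup{Var}_\alpha[\cos\alpha\sin\alpha]=k_+-c_+\ge 0$, so that $\textup{Var}_{\phi_1}[C]=(c_+-k_+^2)Q^2+(k_+-c_+)R^2\ge (c_+-k_+^2)Q^2$. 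Combined with Jensen's inequality $\mathbb{E}_{\phi_{2:M}}[Q^2]\ge(\mathbb{E}_{\phi_{2:M}}[Q])^2$, this reduces the whole problem to evaluating the single scalar $\mathbb{E}_{\phi_{2:M}}[Q]=\frac13\sum_j\textup{Tr}[(\rho_j-\sigma_1\rho_j\sigma_1)\,U_1^\dagger(\bar\phi_1)\,\mathbb{E}_{\phi_{2:M}}[V^\dagger\rho_{j,(\bar{\bm{\phi}},\delta t)}V]\,U_1(\bar\phi_1)]$, where $V$ collects the remaining gates.

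The technical heart is the telescoping average $\mathbb{E}_{\phi_{2:M}}[V^\dagger\rho_{j,(\bar{\bm{\phi}},\delta t)}V]$. Averaging one gate at a time yields the Pauli twirl $\mathbb{E}_{\phi_i}[U_i^\dagger(\cdot)U_i]=k_+(\cdot)+(1-k_+)\sigma_i(\cdot)\sigma_i$ (the odd terms integrate to zero), and iterating over the $M-1$ gates splits the result into the fully centered conjugation $U^\dagger(\bar{\bm{\phi}})\rho_{j,(\bar{\bm{\phi}},\delta t)}U(\bar{\bm{\phi}})$ carrying weight $k_+^{M-1}$, plus a residual of total weight $1-k_+^{M-1}$ built from further $\sigma_i$-conjugations of a density operator, hence itself unit-trace and positive. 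Tracing the centered part against $\rho_j-\sigma_1\rho_j\sigma_1$ reproduces exactly the $j$-th summand of $\Delta_{\bar{\bm{\phi}}}$, while the residual contributes $\textup{Tr}[(\rho_j-\sigma_1\rho_j\sigma_1)X]$ for a state $X$, which lies in $[-1,1]$; averaging over $j$ it remains in $[-1,1]$. Hence $\mathbb{E}_{\phi_{2:M}}[Q]=k_+^{M-1}\Delta_{\bar{\bm{\phi}}}+(1-k_+^{M-1})\xi$ with $\xi\in[-1,1]$, and since the true $\xi$ is unknown I would lower-bound by the worst case $\min_{\xi\in[-1,1]}$, giving Eq.~\eqref{eq:lb_primal}. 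I expect the main obstacle to be the careful bookkeeping of this telescoping step—verifying that each gate average factors cleanly through the $\bar\phi_i$-centering, that the accumulated residual stays a genuine unit-trace operator so the $\xi\in[-1,1]$ bound is legitimate, and that the three contributions assemble precisely into $\Delta_{\bar{\bm{\phi}}}$ rather than a looser constant.
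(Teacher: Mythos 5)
Your proof is correct and lands on exactly the bound in Eq.~\eqref{eq:lb_primal}, but it gets there by a complementary decomposition of the variance, so it is worth recording how it differs from the paper's route. The paper peels parameters from the \emph{last} gate inward: at each step it keeps the ``variance of the conditional expectation'' half of the law of total variance, $\textup{Var}\left[\mathbb{E}_{\alpha_m}[1-C]\right]$, checking by explicit covariance algebra that the discarded pieces are nonnegative; since $\mathbb{E}_{\alpha_m}[1-C]$ is again a cost function of the same form with a twirled (mixed) target state, the argument recurses down to gate $1$, where the factor $c_+-k_+^2$ is finally extracted---with no Jensen step, because all other parameters have already been integrated into the deterministic state $\rho_j^1$ before anything is squared. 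You instead keep the other half, $\mathbb{E}_{\phi_{2:M}}\left[\textup{Var}_{\phi_1}[C\,|\,\phi_{2:M}]\right]$, extract $c_+-k_+^2$ immediately from the exact one-parameter variance (your moment identities $\textup{Var}[\cos^2\alpha]=c_+-k_+^2$, $\textup{Var}[\cos\alpha\sin\alpha]=k_+-c_+\geq 0$, and vanishing covariance by parity are all sound), and then you genuinely need Jensen, $\mathbb{E}[Q^2]\geq(\mathbb{E}[Q])^2$, to reduce to the scalar $\mathbb{E}[Q]$. From there the gate-by-gate twirl $\mathbb{E}_{\alpha_i}[U_i^\dagger(\alpha_i)(\cdot)U_i(\alpha_i)]=k_+(\cdot)+(1-k_+)\sigma_i(\cdot)\sigma_i$ (legitimate because $U_i(\bar{\phi}_i)$ and $U_i(\alpha_i)$ share a generator and commute, so the centering factors cleanly) reproduces exactly the paper's recursive states and the split $k_+^{M-1}U^\dagger(\bar{\bm{\phi}})\rho_{j,(\bar{\bm{\phi}},\delta t)}U(\bar{\bm{\phi}})+(1-k_+^{M-1})\bar{\rho}_j$ with residual trace $\xi\in[-1,1]$. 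Both routes share the two essential ingredients---linearity of the $\cos^2\alpha$ coefficient over the three fidelity terms (so no cross-state covariances ever arise) and the coherent-plus-residual split of the twirled target---and they produce the identical final bound. What yours buys is brevity: the only two inequalities are dropping $(k_+-c_+)R^2\geq 0$ and Jensen, replacing the paper's page of variance--covariance bookkeeping. What the paper's recursion buys is the explicit structural fact that every intermediate object is itself a cost function with a physical mixed-state target, which makes the iteration transparent; in terms of tightness, however, nothing is lost either way, since the two discarded remainders compensate so that both arguments terminate at $(c_+-k_+^2)\left(k_+^{M-1}\Delta_{\bar{\bm{\phi}}}+(1-k_+^{M-1})\xi\right)^2$.
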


\begin{theorem}
\label{thm:var}
If the time step $\delta t$ of the time evolution per each iteration satisfies 
\begin{equation}
\delta t< \frac{\sqrt{3-\frac{2}{3}\sum_{j\in\{0,1,+\}}\textup{Tr}[\sigma_1\rho_j\sigma_1\rho_j]}-1}{2|E_{\textup{m}}|}
\end{equation}
with the $E_{\textup{m}}$ being the energy with the largest absolute value, among the eigenvalues of $H$, and the perturbation $r$ obeys 
\begin{widetext}
\begin{equation}
r^2\leq \left(\frac{3r_0^2}{M-1}\right)\left(\frac{1-\frac{1}{3}\sum_{j\in\{0,1,+\}}\textup{Tr}[\sigma_1\rho_j\sigma_1\rho_j]-2|E_{\textup{m}}|\delta t-2E_{\textup{m}}^2\delta t^2}{2-\frac{1}{3}\sum_{j\in\{0,1,+\}}\textup{Tr}[\sigma_1\rho_j\sigma_1\rho_j]-2|E_{\textup{m}}|\delta t-2E_{\textup{m}}^2\delta t^2}\right),
\end{equation}

with some $0<r_0<1$, then the variance of the cost function over the hypercube $\mathcal{V}(\bar{\bm{\phi}},r)$ is lower bounded by
\begin{equation}
\textup{Var}_{\bm{\phi}\sim\mathcal{D}(\bar{\bm{\phi}},r)}[C(\bm{\phi})]\geq \frac{4r^4}{45}\left(1-\frac{4r^2}{7}\right)\left(\left(1-r_0^2\right)\left(1-\frac{1}{3}\sum_{j\in\{0,1,+\}}\textup{Tr}[\sigma_1\rho_j\sigma_1\rho_j]-2|E_{\textup{m}}|\delta t-2E_{\text{m}}^2\delta t^2\right)\right)^2.
\end{equation}
\end{widetext}
Furthermore, if $|\langle\psi_j|\sigma_1|\psi_j\rangle|<1-\mathcal{O}(1/\textup{poly}(n))$, for some $j$, the variance is at least polynomial about $M$ and $n$, such that $\textup{Var}_{\bm{\phi}\sim\mathcal{D}(\bar{\bm{\phi}},r)}[C(\bm{\phi})]\in\Omega(1/(M^{2}\,\textup{poly}(n)))$ within the perturbation $r\in\Theta(1/\sqrt{M})$.
\end{theorem}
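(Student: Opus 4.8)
The plan is to feed the bound of Proposition~\ref{prop:var} into three essentially independent estimates---for the trigonometric prefactor $c_+-k_+^2$, for the time-step dependence of $\Delta_{\bar{\bm\phi}}$, and for the $\min_\xi$ factor---and then to verify that the stated hypotheses on $\delta t$ and $r$ are exactly what make the three combine into the claimed expression. For the prefactor I would start from the closed form $c_+-k_+^2=\frac{-1+4r^2+\cos(4r)+r\sin(4r)}{32r^2}$ established inside the proof of Proposition~\ref{prop:var}, expand the numerator as an alternating power series whose $r^0,\dots,r^4$ coefficients cancel, leaving $\frac{128}{45}r^6-\frac{512}{315}r^8+\cdots$. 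Since this series alternates with decreasing terms over the relevant range of $r$, truncation gives the one-sided estimate $c_+-k_+^2\ge\frac{4r^4}{45}\bigl(1-\tfrac{4r^2}{7}\bigr)$, which supplies the first two factors of the theorem.

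Next I would control $\Delta_{\bar{\bm\phi}}$. Writing $\tilde{\mathcal T}\coloneqq U^\dagger(\bar{\bm\phi})\mathcal T(\delta t)U(\bar{\bm\phi})=e^{-i\tilde H\delta t}$ with $\tilde H\coloneqq U^\dagger(\bar{\bm\phi})HU(\bar{\bm\phi})$, the trace in $\Delta_{\bar{\bm\phi}}$ collapses, for pure $\rho_j$, to $\tfrac13\sum_j\bigl(|\langle\psi_j|\tilde{\mathcal T}|\psi_j\rangle|^2-|\langle\psi_j|\sigma_1\tilde{\mathcal T}|\psi_j\rangle|^2\bigr)$, which at $\delta t=0$ equals $1-\tfrac13\sum_j\mathrm{Tr}[\sigma_1\rho_j\sigma_1\rho_j]$. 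The crucial observation is that the two terms degrade at different orders in $\delta t$: using the operator inequality $\cos(\tilde H\,\delta t)\succeq I-\tfrac12\tilde H^2\delta t^2$ together with $\|\tilde H\|=\|H\|=|E_{\mathrm m}|$, the diagonal fidelity term is bounded below by $1-E_{\mathrm m}^2\delta t^2$, whereas Cauchy--Schwarz with $\|e^{-i\tilde H\delta t}-I\|\le|E_{\mathrm m}|\delta t$ bounds the $\sigma_1$-shifted term above by $(|\langle\psi_j|\sigma_1|\psi_j\rangle|+|E_{\mathrm m}|\delta t)^2$. Combining and averaging yields $\Delta_{\bar{\bm\phi}}\ge\Delta_{\mathrm{LB}}\coloneqq 1-\tfrac13\sum_j\mathrm{Tr}[\sigma_1\rho_j\sigma_1\rho_j]-2|E_{\mathrm m}|\delta t-2E_{\mathrm m}^2\delta t^2$, and a short rearrangement shows the hypothesis on $\delta t$ is equivalent to $(1+2|E_{\mathrm m}|\delta t)^2<3-\tfrac23\sum_j\mathrm{Tr}[\sigma_1\rho_j\sigma_1\rho_j]$, i.e.\ to $\Delta_{\mathrm{LB}}>0$.

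I would then handle the $\min_\xi$ factor. Setting $a\coloneqq k_+^{M-1}\Delta_{\bar{\bm\phi}}>0$ and $b\coloneqq 1-k_+^{M-1}\ge0$, the minimum of $(a+b\xi)^2$ over $[-1,1]$ equals $(a-b)^2$ whenever $a\ge b$. Lower-bounding $k_+$ by its truncated series $k_+\ge 1-\tfrac{r^2}{3}$ and applying Bernoulli gives $k_+^{M-1}\ge 1-\tfrac{(M-1)r^2}{3}$; inserting the hypothesis $r^2\le\frac{3r_0^2}{M-1}\frac{\Delta_{\mathrm{LB}}}{1+\Delta_{\mathrm{LB}}}$ then yields $k_+^{M-1}\ge 1-r_0^2\frac{\Delta_{\mathrm{LB}}}{1+\Delta_{\mathrm{LB}}}$. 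Using $\Delta_{\bar{\bm\phi}}\ge\Delta_{\mathrm{LB}}$ this gives $a-b=k_+^{M-1}(1+\Delta_{\bar{\bm\phi}})-1\ge(1-r_0^2)\Delta_{\mathrm{LB}}>0$, which simultaneously confirms $a\ge b$ and delivers $\min_\xi(\cdots)^2\ge(1-r_0^2)^2\Delta_{\mathrm{LB}}^2$. Multiplying the three estimates reproduces the displayed bound exactly. For the asymptotic claim I note that $|\langle\psi_j|\sigma_1|\psi_j\rangle|<1-\mathcal O(1/\mathrm{poly}(n))$ forces $1-\tfrac13\sum_j\mathrm{Tr}[\sigma_1\rho_j\sigma_1\rho_j]=\Omega(1/\mathrm{poly}(n))$, so choosing $\delta t$ small enough keeps $\Delta_{\mathrm{LB}}=\Omega(1/\mathrm{poly}(n))$, while $r=\Theta(1/\sqrt M)$ makes the $\tfrac{4r^4}{45}$ prefactor $\Theta(1/M^2)$, giving $\mathrm{Var}\in\Omega(1/(M^2\,\mathrm{poly}(n)))$.

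I expect the main obstacle to be the time-step analysis of $\Delta_{\bar{\bm\phi}}$: one must exploit that the diagonal fidelity term has a vanishing first-order $\delta t$ contribution---which forces the use of the operator inequality for $\cos$ rather than a naive triangle inequality---while the $\sigma_1$-shifted term genuinely degrades at first order. A uniform first-order treatment of both terms would produce the wrong constant and, more seriously, would fail to match the precise $\delta t$-threshold that makes $\Delta_{\mathrm{LB}}$ positive. The remaining pieces are careful but routine: the one-sided series bounds for $c_+-k_+^2$ and $k_+$, and the bookkeeping that aligns the two hypotheses with positivity and with the clean final factorization.
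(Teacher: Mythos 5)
Your proof is correct, and its overall architecture coincides with the paper's: both feed Proposition~\ref{prop:var} into three estimates---the series bound $c_+-k_+^2\geq\tfrac{4r^4}{45}\bigl(1-\tfrac{4r^2}{7}\bigr)$, the bound $\Delta_{\bar{\bm{\phi}}}\geq\Delta_{\mathrm{LB}}\coloneqq 1-\tfrac13\sum_j\mathrm{Tr}[\sigma_1\rho_j\sigma_1\rho_j]-2|E_{\mathrm{m}}|\delta t-2E_{\mathrm{m}}^2\delta t^2$, and the Bernoulli bound $k_+^{M-1}\geq 1-\tfrac{(M-1)r^2}{3}$---so that the hypotheses on $\delta t$ and $r$ give $k_+^{M-1}(1+\Delta_{\bar{\bm{\phi}}})-1\geq(1-r_0^2)\Delta_{\mathrm{LB}}>0$, the minimum over $\xi$ is attained at $\xi=-1$, and the factors multiply to the stated bound. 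The one step where your route genuinely differs is the estimate of $\Delta_{\bar{\bm{\phi}}}$. The paper applies Taylor's theorem with exact remainders at intermediate times $\tau_1,\tau_2\in[0,\delta t]$ (first order for the $\sigma_1$-shifted term, second order for the diagonal term, whose first derivative vanishes) and bounds the remainders via H\"older's inequality, $\lVert\rho\rVert_1\lVert i[\rho,H]\rVert_\infty\leq 2|E_{\mathrm{m}}|$ and $\lVert\rho\rVert_1\lVert[H,[\rho,H]]\rVert_\infty\leq 4E_{\mathrm{m}}^2$. You instead conjugate the evolution into $\tilde{\mathcal{T}}=e^{-i\tilde H\delta t}$ and use the operator inequality $\cos(\tilde H\delta t)\succeq I-\tfrac12\tilde H^2\delta t^2$ for the diagonal term together with $\lVert e^{-i\tilde H\delta t}-I\rVert_\infty\leq|E_{\mathrm{m}}|\delta t$ for the cross term; both exploit the same structural point you flag (the diagonal fidelity degrades only at second order in $\delta t$) and land on the identical $\Delta_{\mathrm{LB}}$, so the two hypotheses and the final constant match exactly. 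Your version is more elementary and self-contained, and you additionally justify the prefactor inequality by the alternating-series argument, which the paper asserts without proof; the paper's commutator/H\"older formulation is the variant that extends more directly to the Trotterized (rather than exact) evolution, where commutator norms also control the Trotter error. Your handling of the asymptotic claim is at the same level of rigor as the paper's, sharing its looseness about the $\mathrm{poly}(n)$ factor hidden inside $r\in\Theta(1/\sqrt{M})$ through the constraint $r^2\lesssim\Delta_{\mathrm{LB}}/M$.
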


\section*{Data Availability}
The datasets generated and analysed during the current study are available in the repository:  https://github.com/sPark9144/svqs-fid-lb. 
%The datasets generated and/or analysed during the current study are not publicly available due to [INSERT REASON WHY DATA ARE NOT PUBLIC] but are available from the corresponding author on reasonable request.
% All data and figures are available from the corresponding author on reasonable request.
\section*{Code Availability}
The underlying code for this study is accessed via this link: https://github.com/sPark9144/svqs-fid-lb.
% The underlying code for this study [and training/validation datasets] is not publicly available but may be made available to qualified researchers on reasonable request from the corresponding author.
% The codes used in this work are available from the corresponding author upon reasonable request.

\section*{Acknowledgement}

We acknowledge the Yonsei University Quantum Computing Project Group for providing support and access to the Quantum System One (Eagle processor). We are also grateful to the Quantum Computing Lab at the Electronics and Telecommunications Research Institute (ETRI) for valuable discussions and technical support.

\section*{Funding}

This work was supported by the National Research Foundation of Korea (NRF) grant funded by the Korea Government (MSIT) (Grant No. 2022M3E4A1077094, RS-2023-NR119931, RS-2023-00281456, RS-2024-00432214, and RS-2025-03532992), and  the Korea Institute of Science and Technology Information (KISTI) (Grant No. P25027). This research was supported by the National Research Council of Science \& Technology(NST) grant by the Korea government (MSIT) (No. GTL25011-410).

\section*{Author Contributions}
S.P. proved the main theorems, developed the code, performed the calculations, obtained the experimental results, and drafted the manuscript.
D.L. contributed to the code for the experimental results. 
J.B. contributed to the development of the conceptual design and methodology.
H.R. assisted in code development and performed calculations.
K.B. supervised the project, contributed to the conceptual design and methodological development, and co-drafted the manuscript.
All authors participated in the discussion of results and revision of the manuscript.

\section*{Competing interests}

The authors declare no competing interests.

\section*{Additional information}

Correspondence and requests for materials should be addressed to
Kyunghyun Baek.

\bibliography{svqs}

@article{Matteo2025,
  title = {Circuit compression for 2D quantum dynamics},
  author = {Matteo D'Anna and Yuxuan Zhang and Roeland Wiersema and Manuel S. Rudolph and Juan Carrasquilla},
  journal = {arXiv:2507.01883 [quant-ph]},
  year = {2025}
}

@article{childs2019nearly,
  title={Nearly optimal lattice simulation by product formulas},
  author={Childs, Andrew M and Su, Yuan},
  journal={Physical review letters},
  volume={123},
  number={5},
  pages={050503},
  year={2019},
  publisher={APS}
}

@article{haah2021quantum,
  title={Quantum algorithm for simulating real time evolution of lattice Hamiltonians},
  author={Haah, Jeongwan and Hastings, Matthew B and Kothari, Robin and Low, Guang Hao},
  journal={SIAM Journal on Computing},
  volume={52},
  number={6},
  pages={FOCS18--250},
  year={2021},
  publisher={SIAM}
}

@article{Kempe2006,
author = {Kempe, Julia and Kitaev, Alexei and Regev, Oded},
title = {The Complexity of the Local Hamiltonian Problem},
journal = {SIAM Journal on Computing},
volume = {35},
number = {5},
pages = {1070-1097},
year = {2006},
doi = {10.1137/S0097539704445226},
URL = { 
        https://doi.org/10.1137/S0097539704445226
}
,
    abstract = { The k-{\locHam} problem is a natural complete problem for the complexity class \$\QMA\$, the quantum analogue of \$\NP\$. It is similar in spirit to {\sc MAX-k-SAT}, which is \$\NP\$-complete for \$k\geq 2\$. It was known that the problem is \$\QMA\$-complete for any \$k \geq 3\$. On the other hand, 1-{\locHam} is in {\P} and hence not believed to be \$\QMA\$-complete. The complexity of the 2-{\locHam} problem has long been outstanding. Here we settle the question and show that it is \$\QMA\$-complete. We provide two independent proofs; our first proof uses only elementary linear algebra. Our second proof uses a powerful technique for analyzing the sum of two Hamiltonians; this technique is based on perturbation theory and we believe that it might prove useful elsewhere. Using our techniques we also show that adiabatic computation with 2-local interactions on qubits is equivalent to standard quantum computation. }
}

@article{Cerezo2021_vqa,
	abstract = {Applications such as simulating complicated quantum systems or solving large-scale linear algebra problems are very challenging for classical computers, owing to the extremely high computational cost. Quantum computers promise a solution, although fault-tolerant quantum computers will probably not be available in the near future. Current quantum devices have serious constraints, including limited numbers of qubits and noise processes that limit circuit depth. Variational quantum algorithms (VQAs), which use a classical optimizer to train a parameterized quantum circuit, have emerged as a leading strategy to address these constraints. VQAs have now been proposed for essentially all applications that researchers have envisaged for quantum computers, and they appear to be the best hope for obtaining quantum advantage. Nevertheless, challenges remain, including the trainability, accuracy and efficiency of VQAs. Here we overview the field of VQAs, discuss strategies to overcome their challenges and highlight the exciting prospects for using them to obtain quantum advantage.},
	author = {Cerezo, M. and Arrasmith, Andrew and Babbush, Ryan and Benjamin, Simon C. and Endo, Suguru and Fujii, Keisuke and McClean, Jarrod R. and Mitarai, Kosuke and Yuan, Xiao and Cincio, Lukasz and Coles, Patrick J.},
	date = {2021/09/01},
	date-added = {2025-07-18 16:21:06 +0900},
	date-modified = {2025-07-18 16:21:06 +0900},
	doi = {10.1038/s42254-021-00348-9},
	id = {Cerezo2021},
	isbn = {2522-5820},
	journal = {Nature Reviews Physics},
	number = {9},
	pages = {625--644},
	title = {Variational quantum algorithms},
	url = {https://doi.org/10.1038/s42254-021-00348-9},
	volume = {3},
	year = {2021},
	bdsk-url-1 = {https://doi.org/10.1038/s42254-021-00348-9}}

@article{Bharti2022_noisy,
  title = {Noisy intermediate-scale quantum algorithms},
  author = {Bharti, Kishor and Cervera-Lierta, Alba and Kyaw, Thi Ha and Haug, Tobias and Alperin-Lea, Sumner and Anand, Abhinav and Degroote, Matthias and Heimonen, Hermanni and Kottmann, Jakob S. and Menke, Tim and Mok, Wai-Keong and Sim, Sukin and Kwek, Leong-Chuan and Aspuru-Guzik, Al\'an},
  journal = {Rev. Mod. Phys.},
  volume = {94},
  issue = {1},
  pages = {015004},
  numpages = {69},
  year = {2022},
  month = {Feb},
  publisher = {American Physical Society},
  doi = {10.1103/RevModPhys.94.015004},
  url = {https://link.aps.org/doi/10.1103/RevModPhys.94.015004}
}

@article{Nakanishi2019_ssvqe,
  title = {Subspace-search variational quantum eigensolver for excited states},
  author = {Nakanishi, Ken M. and Mitarai, Kosuke and Fujii, Keisuke},
  journal = {Phys. Rev. Res.},
  volume = {1},
  issue = {3},
  pages = {033062},
  numpages = {7},
  year = {2019},
  month = {Oct},
  publisher = {American Physical Society},
  doi = {10.1103/PhysRevResearch.1.033062},
  url = {https://link.aps.org/doi/10.1103/PhysRevResearch.1.033062}
}

@book{Nielsen_Chuang_2010, place={Cambridge}, title={Quantum Computation and Quantum Information: 10th Anniversary Edition}, publisher={Cambridge University Press}, author={Nielsen, Michael A. and Chuang, Isaac L.}, year={2010}}

@article{Feynman1982, 
year = {1982}, 
title = {{Simulating physics with computers}}, 
author = {Feynman, Richard P.}, 
journal = {International Journal of Theoretical Physics}, 
issn = {0020-7748}, 
doi = {10.1007/bf02650179}, 
pages = {467--488}, 
number = {6-7}, 
volume = {21}, 
local-url = {file://localhost/Users/baek/Documents/Papers%20Library/Feynman-Simulating%20physics%20with%20computers-1982-International%20Journal%20of%20Theoretical%20Physics.pdf}
}

@article{Benedetti2021,
  title = {Hardware-efficient variational quantum algorithms for time evolution},
  author = {Benedetti, Marcello and Fiorentini, Mattia and Lubasch, Michael},
  journal = {Phys. Rev. Res.},
  volume = {3},
  issue = {3},
  pages = {033083},
  numpages = {15},
  year = {2021},
  month = {Jul},
  publisher = {American Physical Society},
  doi = {10.1103/PhysRevResearch.3.033083},
  url = {https://link.aps.org/doi/10.1103/PhysRevResearch.3.033083}
}

@article{Sahinoglu2021,
	abstract = {We study the problem of simulating the dynamics of spin systems when the initial state is supported on a subspace of low energy of a Hamiltonian H. This is a central problem in physics with vast applications in many-body systems and beyond, where the interesting physics takes place in the low-energy sector. We analyze error bounds induced by product formulas that approximate the evolution operator and show that these bounds depend on an effective low-energy norm of H. We find improvements over the best previous complexities of product formulas that apply to the general case, and these improvements are more significant for long evolution times that scale with the system size and/or small approximation errors. To obtain these improvements, we prove exponentially decaying upper bounds on the leakage to high-energy subspaces due to the product formula. Our results provide a path to a systematic study of Hamiltonian simulation at low energies, which will be required to push quantum simulation closer to reality.},
	author = {{\c S}ahino{\u g}lu, Burak and Somma, Rolando D.},
	date = {2021/07/27},
	date-added = {2023-12-14 17:45:49 +0900},
	date-modified = {2023-12-14 17:45:49 +0900},
	doi = {10.1038/s41534-021-00451-w},
	id = {{\c S}ahino{\u g}lu2021},
	isbn = {2056-6387},
	journal = {npj Quantum Information},
	number = {1},
	pages = {119},
	title = {Hamiltonian simulation in the low-energy subspace},
	url = {https://doi.org/10.1038/s41534-021-00451-w},
	volume = {7},
	year = {2021},
	bdsk-url-1 = {https://doi.org/10.1038/s41534-021-00451-w}}

@article{Lloyd1996,
	abstract = {Feynman's 1982 conjecture, that quantum computers can be programmed to simulate any local quantum system, is shown to be correct.},
	author = {Seth Lloyd},
	doi = {10.1126/science.273.5278.1073},
	journal = {Science},
	number = {5278},
	pages = {1073-1078},
	title = {Universal Quantum Simulators},
	url = {https://www.science.org/doi/abs/10.1126/science.273.5278.1073},
	volume = {273},
	year = {1996},
	bdsk-url-1 = {https://www.science.org/doi/abs/10.1126/science.273.5278.1073},
	bdsk-url-2 = {https://doi.org/10.1126/science.273.5278.1073}}

@article{Childs2021,
  title = {Theory of Trotter Error with Commutator Scaling},
  author = {Childs, Andrew M. and Su, Yuan and Tran, Minh C. and Wiebe, Nathan and Zhu, Shuchen},
  journal = {Phys. Rev. X},
  volume = {11},
  issue = {1},
  pages = {011020},
  numpages = {49},
  year = {2021},
  month = {Feb},
  publisher = {American Physical Society},
  doi = {10.1103/PhysRevX.11.011020},
  url = {https://link.aps.org/doi/10.1103/PhysRevX.11.011020}
}

@article{Li2017,
  title = {Efficient Variational Quantum Simulator Incorporating Active Error Minimization},
  author = {Li, Ying and Benjamin, Simon C.},
  journal = {Phys. Rev. X},
  volume = {7},
  issue = {2},
  pages = {021050},
  numpages = {14},
  year = {2017},
  month = {Jun},
  publisher = {American Physical Society},
  doi = {10.1103/PhysRevX.7.021050},
  url = {https://link.aps.org/doi/10.1103/PhysRevX.7.021050}
}

@article{xu2023new,
  title={New semidefinite relaxations for a class of complex quadratic programming problems},
  author={Xu, Yingzhe and Lu, Cheng and Deng, Zhibin and Liu, Ya-Feng},
  journal={Journal of Global Optimization},
  volume={87},
  number={1},
  pages={255--275},
  year={2023},
  publisher={Springer}
}

@article{goemans1995improved,
  title={Improved approximation algorithms for maximum cut and satisfiability problems using semidefinite programming},
  author={Goemans, Michel X and Williamson, David P},
  journal={Journal of the ACM (JACM)},
  volume={42},
  number={6},
  pages={1115--1145},
  year={1995},
  publisher={ACM New York, NY, USA}
}

@article{alizadeh1998primal,
  title={Primal-dual interior-point methods for semidefinite programming: convergence rates, stability and numerical results},
  author={Alizadeh, Farid and Haeberly, Jean-Pierre A and Overton, Michael L},
  journal={SIAM Journal on Optimization},
  volume={8},
  number={3},
  pages={746--768},
  year={1998},
  publisher={SIAM}
}

@inproceedings{jiang2020faster,
  title={A faster interior point method for semidefinite programming},
  author={Jiang, Haotian and Kathuria, Tarun and Lee, Yin Tat and Padmanabhan, Swati and Song, Zhao},
  booktitle={2020 IEEE 61st annual symposium on foundations of computer science (FOCS)},
  pages={910--918},
  year={2020},
  organization={IEEE}
}

@article{beckey2021computable,
  title={Computable and operationally meaningful multipartite entanglement measures},
  author={Beckey, Jacob L and Gigena, N and Coles, Patrick J and Cerezo, M},
  journal={Physical Review Letters},
  volume={127},
  number={14},
  pages={140501},
  year={2021},
  publisher={APS}
}

@article{nakanishi2020sequential,
  title={Sequential minimal optimization for quantum-classical hybrid algorithms},
  author={Nakanishi, Ken M and Fujii, Keisuke and Todo, Synge},
  journal={Physical Review Research},
  volume={2},
  number={4},
  pages={043158},
  year={2020},
  publisher={APS}
}

@article{puig2025variational,
  title={Variational quantum simulation: a case study for understanding warm starts},
  author={Puig, Ricard and Drudis, Marc and Thanasilp, Supanut and Holmes, Zo{\"e}},
  journal={PRX Quantum},
  volume={6},
  number={1},
  pages={010317},
  year={2025},
  publisher={APS}
}

@article{georgescu2014quantum,
  title={Quantum simulation},
  author={Georgescu, Iulia M and Ashhab, Sahel and Nori, Franco},
  journal={Reviews of Modern Physics},
  volume={86},
  number={1},
  pages={153--185},
  year={2014},
  publisher={APS}
}

@software{deepmind2020jax,
  title = {The {D}eep{M}ind {JAX} {E}cosystem},
  author = {DeepMind and Babuschkin, Igor and Baumli, Kate and Bell, Alison and Bhupatiraju, Surya and Bruce, Jake and Buchlovsky, Peter and Budden, David and Cai, Trevor and Clark, Aidan and Danihelka, Ivo and Dedieu, Antoine and Fantacci, Claudio and Godwin, Jonathan and Jones, Chris and Hemsley, Ross and Hennigan, Tom and Hessel, Matteo and Hou, Shaobo and Kapturowski, Steven and Keck, Thomas and Kemaev, Iurii and King, Michael and Kunesch, Markus and Martens, Lena and Merzic, Hamza and Mikulik, Vladimir and Norman, Tamara and Papamakarios, George and Quan, John and Ring, Roman and Ruiz, Francisco and Sanchez, Alvaro and Sartran, Laurent and Schneider, Rosalia and Sezener, Eren and Spencer, Stephen and Srinivasan, Srivatsan and Stanojevi\'{c}, Milo\v{s} and Stokowiec, Wojciech and Wang, Luyu and Zhou, Guangyao and Viola, Fabio},
  url = {http://github.com/deepmind},
  year = {2020},
}

@article{childs2012hamiltonian,
  title={Hamiltonian simulation using linear combinations of unitary operations},
  author={Childs, Andrew M and Wiebe, Nathan},
  journal={arXiv preprint arXiv:1202.5822},
  year={2012}
}

@article{low2017optimal,
  title={Optimal Hamiltonian simulation by quantum signal processing},
  author={Low, Guang Hao and Chuang, Isaac L},
  journal={Physical review letters},
  volume={118},
  number={1},
  pages={010501},
  year={2017},
  publisher={APS}
}

@article{low2019hamiltonian,
  title={Hamiltonian simulation by qubitization},
  author={Low, Guang Hao and Chuang, Isaac L},
  journal={Quantum},
  volume={3},
  pages={163},
  year={2019},
  publisher={Verein zur F{\"o}rderung des Open Access Publizierens in den Quantenwissenschaften}
}

@article{layden2022first,
  title={First-order trotter error from a second-order perspective},
  author={Layden, David},
  journal={Physical Review Letters},
  volume={128},
  number={21},
  pages={210501},
  year={2022},
  publisher={APS}
}

@article{daley2022practical,
  title={Practical quantum advantage in quantum simulation},
  author={Daley, Andrew J and Bloch, Immanuel and Kokail, Christian and Flannigan, Stuart and Pearson, Natalie and Troyer, Matthias and Zoller, Peter},
  journal={Nature},
  volume={607},
  number={7920},
  pages={667--676},
  year={2022},
  publisher={Nature Publishing Group UK London}
}

@article{lin2021real,
  title={Real-and imaginary-time evolution with compressed quantum circuits},
  author={Lin, Sheng-Hsuan and Dilip, Rohit and Green, Andrew G and Smith, Adam and Pollmann, Frank},
  journal={PRX Quantum},
  volume={2},
  number={1},
  pages={010342},
  year={2021},
  publisher={APS}
}

@article{heya2023subspace,
  title={Subspace variational quantum simulator},
  author={Heya, Kentaro and Nakanishi, Ken M and Mitarai, Kosuke and Yan, Zhiguang and Zuo, Kun and Suzuki, Yasunari and Sugiyama, Takanori and Tamate, Shuhei and Tabuchi, Yutaka and Fujii, Keisuke and others},
  journal={Physical Review Research},
  volume={5},
  number={2},
  pages={023078},
  year={2023},
  publisher={APS}
}

@article{rethinasamy2023estimating,
  title={Estimating distinguishability measures on quantum computers},
  author={Rethinasamy, Soorya and Agarwal, Rochisha and Sharma, Kunal and Wilde, Mark M},
  journal={Physical Review A},
  volume={108},
  number={1},
  pages={012409},
  year={2023},
  publisher={APS}
}

@article{yoshioka2025krylov,
  title={Krylov diagonalization of large many-body Hamiltonians on a quantum processor},
  author={Yoshioka, Nobuyuki and Amico, Mirko and Kirby, William and Jurcevic, Petar and Dutt, Arkopal and Fuller, Bryce and Garion, Shelly and Haas, Holger and Hamamura, Ikko and Ivrii, Alexander and others},
  journal={Nature Communications},
  volume={16},
  number={1},
  pages={5014},
  year={2025},
  publisher={Nature Publishing Group UK London}
}

@article{cortes2022fast,
  title={Fast-forwarding quantum simulation with real-time quantum Krylov subspace algorithms},
  author={Cortes, Cristian L and DePrince III, A Eugene and Gray, Stephen K},
  journal={Physical Review A},
  volume={106},
  number={4},
  pages={042409},
  year={2022},
  publisher={APS}
}

@article{campbell2021early,
  title={Early fault-tolerant simulations of the Hubbard model},
  author={Campbell, Earl T},
  journal={Quantum Science and Technology},
  volume={7},
  number={1},
  pages={015007},
  year={2021},
  publisher={IOP Publishing}
}

@article{gong2024complexity,
  title={Complexity of digital quantum simulation in the low-energy subspace: Applications and a lower bound},
  author={Gong, Weiyuan and Zhou, Shuo and Li, Tongyang},
  journal={Quantum},
  volume={8},
  pages={1409},
  year={2024},
  publisher={Verein zur F{\"o}rderung des Open Access Publizierens in den Quantenwissenschaften}
}

@article{zlokapa2024hamiltonian,
  title={Hamiltonian simulation for low-energy states with optimal time dependence},
  author={Zlokapa, Alexander and Somma, Rolando D},
  journal={Quantum},
  volume={8},
  pages={1449},
  year={2024},
  publisher={Verein zur F{\"o}rderung des Open Access Publizierens in den Quantenwissenschaften}
}

@article{oliveira2025quantum,
  title={Quantum block Krylov subspace projector algorithm for computing low-lying eigenenergies},
  author={Oliveira, Maria Gabriela Jord{\~a}o and Glaser, Nina},
  journal={arXiv preprint arXiv:2506.09971},
  year={2025}
}

@article{stair2020multireference,
  title={A multireference quantum Krylov algorithm for strongly correlated electrons},
  author={Stair, Nicholas H and Huang, Renke and Evangelista, Francesco A},
  journal={Journal of chemical theory and computation},
  volume={16},
  number={4},
  pages={2236--2245},
  year={2020},
  publisher={ACS Publications}
}

@article{cirac2012goals,
  title={Goals and opportunities in quantum simulation},
  author={Cirac, J Ignacio and Zoller, Peter},
  journal={Nature physics},
  volume={8},
  number={4},
  pages={264--266},
  year={2012},
  publisher={Nature Publishing Group UK London}
}

@article{altman2021quantum,
  title={Quantum simulators: Architectures and opportunities},
  author={Altman, Ehud and Brown, Kenneth R and Carleo, Giuseppe and Carr, Lincoln D and Demler, Eugene and Chin, Cheng and DeMarco, Brian and Economou, Sophia E and Eriksson, Mark A and Fu, Kai-Mei C and others},
  journal={PRX quantum},
  volume={2},
  number={1},
  pages={017003},
  year={2021},
  publisher={APS}
}

@article{berthusen2022quantum,
  title={Quantum dynamics simulations beyond the coherence time on noisy intermediate-scale quantum hardware by variational trotter compression},
  author={Berthusen, Noah F and Trevisan, Tha{\'\i}s V and Iadecola, Thomas and Orth, Peter P},
  journal={Physical Review Research},
  volume={4},
  number={2},
  pages={023097},
  year={2022},
  publisher={APS}
}

@article{commeau2020variational,
  title={Variational hamiltonian diagonalization for dynamical quantum simulation},
  author={Commeau, Benjamin and Cerezo, Marco and Holmes, Zo{\"e} and Cincio, Lukasz and Coles, Patrick J and Sornborger, Andrew},
  journal={arXiv preprint arXiv:2009.02559},
  year={2020}
}

@article{abrams1999quantum,
  title={Quantum algorithm providing exponential speed increase for finding eigenvalues and eigenvectors},
  author={Abrams, Daniel S and Lloyd, Seth},
  journal={Physical Review Letters},
  volume={83},
  number={24},
  pages={5162},
  year={1999},
  publisher={APS}
}

@article{ding2023even,
  title={Even shorter quantum circuit for phase estimation on early fault-tolerant quantum computers with applications to ground-state energy estimation},
  author={Ding, Zhiyan and Lin, Lin},
  journal={PRX Quantum},
  volume={4},
  number={2},
  pages={020331},
  year={2023},
  publisher={APS}
}

@article{yi2022spectral,
  title={Spectral analysis of product formulas for quantum simulation},
  author={Yi, Changhao and Crosson, Elizabeth},
  journal={npj Quantum Information},
  volume={8},
  number={1},
  pages={37},
  year={2022},
  publisher={Nature Publishing Group UK London}
}

@article{otten2019noise,
  title={Noise-resilient quantum dynamics using symmetry-preserving ansatzes},
  author={Otten, Matthew and Cortes, Cristian L and Gray, Stephen K},
  journal={arXiv preprint arXiv:1910.06284},
  year={2019}
}

@article{barison2021efficient,
  title={An efficient quantum algorithm for the time evolution of parameterized circuits},
  author={Barison, Stefano and Vicentini, Filippo and Carleo, Giuseppe},
  journal={Quantum},
  volume={5},
  pages={512},
  year={2021},
  publisher={Verein zur F{\"o}rderung des Open Access Publizierens in den Quantenwissenschaften}
}

@misc{haug2021opt,
      title={Optimal training of variational quantum algorithms without barren plateaus}, 
      author={Tobias Haug and M. S. Kim},
      year={2021},
      archivePrefix={arXiv},
      primaryClass={quant-ph},
      url={https://arxiv.org/abs/2104.14543}, 
}

\end{document}